\newcommand{\new}[1]{{\color{black}{#1}}}
\begin{document}
\title{Modification tolerant signature schemes: location and correction}
\author{Thais Bardini Idalino\thanks{Funding granted from CNPq-Brazil [233697/2014-4] and OGS.} \and Lucia Moura \and Carlisle Adams}
\institute{University of Ottawa, Ottawa, Canada\\
\email{\{tbardini,lmoura,cadams\}@uottawa.ca}}
	
\maketitle
\makebox[\linewidth]{\small October 2019}

\begin{abstract}
This paper considers malleable digital signatures, for situations where  data is modified after it is signed.
They can be used in applications where either the data can be modified (collaborative work), or the data must be modified (redactable and content extraction signatures) or we need to know which parts of the data have been modified (data forensics). A \new{classical} digital signature is valid for a message only if the signature is authentic and not even one bit of the message has been modified. We propose a general framework of modification tolerant signature schemes (MTSS), which can provide either location only or both location and correction, for modifications in a signed message divided into $n$ blocks. This general scheme uses a set of allowed modifications that must be specified. We present an instantiation of MTSS with a tolerance level of $d$, indicating modifications can appear in any set of up to $d$ message blocks. This tolerance level $d$ is needed in practice for parametrizing and controlling the growth of the signature size with respect to the number $n$ of blocks; using combinatorial group testing (CGT) the signature has size $O(d^2 \log n)$ which is close to the \new{best known} lower bound \new{of $\Omega(\frac{d^2}{\log d} (\log n))$}. There has been work in this very same direction using CGT by Goodrich et al. (ACNS 2005) and Idalino et al. (IPL 2015). Our work differs from theirs in that in one scheme we extend these ideas to include corrections of modification with provable security, and in another variation of the scheme we go in the opposite direction and guarantee privacy for redactable signatures, in this case preventing any leakage of redacted information.
\end{abstract}

\begin{keywords}
Modification tolerant signature; redactable signature; content extraction signature; malleable signature; modification localization; modification correction; digital signature; combinatorial group testing; cover-free family. 
\end{keywords}

\section{Introduction}

\new{Classical} digital signature schemes (CDSS) are used to guarantee that a document was created by the sender (authenticity) and has not been modified along the way (integrity); they also help to prevent the signer from claiming s/he did not sign a message (non-repudiation). The verification algorithm has a boolean output: a successful outcome is achieved if and only if both the signature is valid and the document has not been modified. 
In this paper, we consider more general digital signature schemes which we call {\em modification-tolerant signature schemes} (MTSS), which go beyond the ability of 
{\bf detecting} modifications provided by CDSS, and have the ability of  {\bf locating} modifications or {\bf locating and correcting} modifications. 
We discuss two types of modification-tolerant signature schemes: a general MTSS that allows the location of modified blocks of the data, and an MTSS with \emph{correction capability}, that allows the correction of the modified blocks, recovering the original message. We give three instantiations of the scheme for the purpose of location, correction, and redaction.

In which situations can modifications be allowed or even desired in the context of digital signatures?
One situation where modifications are desirable is the so-called redactable signatures~\cite{mtss:Johnson}, also called content extraction signatures~\cite{mtss:Steinfeld}.
In redactable signature schemes~\cite{mtss:Johnson,mtss:PohlsThesis,mtss:Steinfeld}, some blocks of a document are redacted (blacked out) for privacy purposes, without interacting with the original signer and without compromising the validity of the signature. Related to these are the ``sanitizable'' signatures introduced by Ateniese et al.~\cite{mtss:Ateniese}
which are also used for the purpose of privacy of (parts of) the original message, but the modifications are done by a semi-trusted third party (the sanitizer) who can modify blocks of the document and sign the modified version, without the need of intervention by the original signer. 
Thus, in both redactable and sanitizable signature schemes, the privacy of the (redacted or modified parts of the) original message must be preserved. For MTSS, privacy of the original data that has been modified is not required. An MTSS with only location capability that guarantees privacy can be used for implementing redactable signatures, but that is not an intrinsic requirement for MTSS. Indeed, as pointed out in~\cite{mtss:PohlsThesis} the scheme provided in~\cite{mtss:thaisIPL} does not meet standard privacy requirements. In the case of MTSS with correction capability, privacy cannot be guaranteed by definition, since the method \new{permits} the recovery of the original document.

A different scenario where a moderate amount of modification is desirable involves collaborative work. The authors of a document can use MTSS to allow further modifications as long as the original parts can be identified as their own. Other collaborators may apply modifications to the original document and append the original document's signature, which provides information about which blocks were modified, as well as a guarantee of integrity of the original blocks. MTSS can separate the original blocks from the modified ones, while correction capability further provides retrieval of the original version of the document.

Locating modifications has also been considered in situations where modifications are not desired, but their localization is used as a mechanism to mitigate damage.  Indeed, in the context of message authentication codes for data structures, Goodrich et al.~\cite{mtss:forensics} propose a message authentication code (MAC)  with modification locating capabilities. They propose their use in data forensics applications since the identification of which information was modified can be used to identify the perpetrator of the crime (for example: the salary of a person or the grade of a student was modified on a database). In~\cite{mtss:thaisIPL}, in the context of MTSS with only location capability, the authors mention the advantage of being able to guarantee the integrity of part of the data instead of the all-or-nothing situation given by a CDSS boolean outcome. For example, the integrity of 95\% of a document or database may contain enough information needed for a specific application, whereas it would have been considered completely corrupted and unusable in the case of CDSS. In the case of MTSS with correction capability, we can go beyond mitigating the damage, and actually recover the originally signed document.

The mechanism behind the MTSS schemes instantiated here, like in \cite{mtss:forensics,mtss:thaisIPL}, is the use of cover-free families (CFF) in the same way as it is traditionally employed in combinatorial group testing. Combinatorial group testing has been used in cryptography in the context of digital signatures~\cite{mtss:zaverucha}, broadcast communication~\cite{mtss:blacklistingStinson,mtss:broadcastauth}, and many others. The main idea is to test $t$ groups, which are subsets of the $n$ blocks, together (with enough redundancy and intersections between groups), and each group is used to produce a distinct hash. The tuple of $t$ hashes is then signed and provided that no more than $d$ blocks were modified, it is possible to identify precisely which blocks have been modified. The main efficiency concern is the compression ratio: the order of growth of $n/t$ as $n$ grows, for a fixed modification tolerance $d$. Using cover-free families, it is possible to achieve a compression ratio of $O(n/(d^2 \log n))$, which is not much worse than the $O(n)$ compression ratio given by modification intolerant schemes such as CDSS.

\noindent
\textbf{Our contributions:} 
In the present paper, we propose a general framework for MTSS, and a specific MTSS scheme for modification correction and another for redacted signatures. Both schemes are based on a modification tolerance $d$, indicating the maximum number of modifications or redactions. The security of the schemes rely only on an underlying \new{classical} digital signature scheme and on collision resistant hash functions; therefore the schemes can be used in the context of postquantum cryptography as long as these underlying building blocks are postquantum secure.

First, we extend methods that use cover-free families for modification location~\cite{mtss:forensics,mtss:thaisIPL} to further provide modification correction (Scheme 2 in Section~\ref{mtss:MTSSCFF}), provided that the size of the blocks are small enough, say bounded by a constant $s$ where an algorithm with $2^s$ steps can run in an acceptable amount of time. In short, the localization of the modified blocks is done using CFF similarly to \cite{mtss:forensics,mtss:thaisIPL} with an extra constraint on blocks of size at most $s$, and an exhaustive search is used to correct the block to a value that makes the concatenation of a specific group of blocks match its original hash. The assumption that a collision resistant hash function does not cause collisions for messages of small size up to $s$ is not only reasonable, but can be tested before employing it in the method.

Second, we propose a variation of the scheme for modification location to ensure total privacy of the modified blocks in order to extend it for the purpose of redactable signatures (Scheme 3 in Section~\ref{mtss:SecRedactable}). In this case, a block modification can be a redaction to hide private information. Unlike the modification correction scheme, this scheme does not need a restriction on block size.

\noindent
\textbf{Paper Organization:} 
In Section~\ref{mtss:SecRelated}, we discuss related work. In Section~\ref{mtss:SecFramework}, we give a general framework for modification tolerant signature schemes with and without modification correction. In Section~\ref{mtss:SecSchems}, we instantiate the schemes for location and/or correction that allows any modifications in up to $d$ blocks using cover-free families. In Section~\ref{mtss:SecSecurity}, we prove the unforgeability of the schemes under the adaptive chosen message attack. In Section~\ref{mtss:SecRedactable},  we extend and instantiate the modification location scheme for redactable signatures, and prove it guarantees total privacy. In Section~\ref{mtss:implementation}, we discuss implementation issues of the schemes proposed in this paper. In Section~\ref{mtss:param}, we consider the relationship of parameters and the impact on the size of the signature and the ability to locate and correct modifications.
A conclusion is given in Section~\ref{mtss:conclusion}.


\section{Related Work}\label{mtss:SecRelated}
The idea of error detection and correction of corrupted data is well established in coding theory. Combinatorial group testing can help locating where errors (modification of data) occurred, which can be considered an intermediate goal, which is stronger than error detection and more efficient than error correction.
In the context of cryptography, localization of modified portions of data has been studied in the context of hash functions~\cite{mtss:locHash,mtss:locHash2}, digital signatures~\cite{mtss:Biyashev2012,mtss:thaisIPL}, and message authentication codes~\cite{mtss:DBMAC,mtss:forensics}.
Correction of modifications is proposed by some schemes, but they have severe limitations on the correction capability~\cite{mtss:Biyashev2012,mtss:DBMAC}.
Schemes such as the ones in~\cite{mtss:locHash,mtss:locHash2,mtss:forensics,mtss:thaisIPL} use techniques related to cover-free families to generate redundant information, which is used later to locate the modified portions of the data. The benefit of cover-free families is that they require a small amount of redundant data for a fixed threshold $d$ in location capability.

The methods that provide location at the hash level~\cite{mtss:locHash,mtss:locHash2} are based on superimposed codes, which are binary codes equivalent to $d$-cover-free families ($d$-CFF). These codes are used in two steps of the process: to generate a small set of hash tags from an input message, and to verify the integrity of the message using these tags. Because of the $d$-CFF property, the scheme allows the identification of up to $d$ corrupted data segments~\cite{mtss:locHash,mtss:locHash2}. 
A digital signature scheme with location of modifications is presented in~\cite{mtss:thaisIPL}. Their scheme uses $d$-CFF matrices to generate a digital signature scheme that carries extra information for location of modifications. In this case, data is divided into $n$ blocks which are concatenated and hashed according to the rows of the $d$-CFF. This set of hash values is signed using any \new{classical} digital signature algorithm, and both the signature and the set of hashes form the new signature of the message. The verification algorithm uses this information to precisely locate up to $d$ blocks that contain modified content~\cite{mtss:thaisIPL}. This is presented in detail in Section~\ref{mtss:MTSSCFF} as Scheme 1.
In~\cite{mtss:forensics}, the authors propose a solution for locating the items that were modified inside a data structure. They compute a small set of message authentication tags based on the rows of a $d$-disjunct matrix, which is equivalent to $d$-CFFs. These tags are stored within the topology of the data structure, and an auditor can later use these tags to determine exactly which pieces of data were modified and perform forensics investigation.

The methods described above have a common approach: they all compute extra integrity information based on combinatorial group testing techniques and use this information to locate up to $d$ modified pieces of data. In our work, we show that if these pieces are small enough, it is possible to actually correct them to the originally-signed values. In the literature, we find signature and message authentication schemes that provide the correction of modifications, such as in~\cite{mtss:Biyashev2012,mtss:DBMAC}. These schemes use different approaches than the ones previously discussed, and the capacity of location and correction is very limited. In~\cite{mtss:Biyashev2012} only a single error can be corrected, while in~\cite{mtss:DBMAC} the data is divided into blocks, and only one bit per block can be corrected.

Error correcting codes ($d$-ECC) are related to this work, as we could use them to provide authentication and correction of $d$ modifications as we explain next. For a message $m$ formed by blocks of up to $\log_2 q$ bits each,
sign it using a CDSS and compute the corresponding codeword $c_m$ using a $d$-ECC over alphabet $q$. Send the codeword $c_m$ and the signature $\sigma$. 
The receiver obtains $c'$, which may differ from $c_m$ in at most $d$ blocks, and by decoding $c'$, obtains $c_m$ and thus the original message $m$. Then, the verifier can check its authenticity and integrity by verifying $(m,\sigma$). \new{We call this scheme \emph{Scheme E}}. In Section~\ref{section:comparison}, we compare \new{Scheme E} and Scheme 2 using $d$-CFF, and find that they have very similar compression ratios. Scheme 2 has the advantage of giving more information in the failing case, of being more efficient when we do not need correction, and of being a simple variation of Schemes 1 and 3, facilitating comparison among the different approaches. 

There is also a solution proposed in~\cite{mtss:Barreto} for location of modifications in images with digital signatures, using a different approach than $d$-CFFs.
Their scheme consists \new{of} partitioning the image into $n$ blocks, and generating one digital signature per block (with dependency on neighbours to avoid attacks). 
Although we can use signature schemes with very small signature sizes, the total number of signatures in this scheme is linear with the number of blocks $n$, while in~\cite{mtss:thaisIPL}, for example, the amount of extra information produced is logarithmic in $n$ because of the $d$-CFF.

Redactable or content extraction digital signatures~\cite{mtss:Derler,mtss:Johnson,mtss:Steinfeld} are used when the owner of a document signed by another party, needs to show parts of that document to a third party, but wants to keep parts of the document private. In Section~\ref{mtss:SecRedactable}, we give a variation of an MTSS scheme that implements redactable signatures. More on related work involving redactable signatures is discussed in that section.


\section{Framework for modification-tolerant digital signatures}\label{mtss:SecFramework}

\subsection{Classical Digital Signature Schemes (CDSS)}
Classical digital signature schemes are based on public-key cryptography and consist of three algorithms: {\sc KeyGeneration},  {\sc Sign}, and {\sc Verify}. We consider that any document or piece of data to be signed is a sequence of bits called a {\em message}.

\begin{definition}
	A classical digital signature scheme (CDSS) is a tuple $\Sigma$ of three algorithms:
	\begin{itemize} 
		\item {\sc KeyGeneration}$(\ell)$ generates a pair of keys (secret and public) ($SK, PK$) for a given security parameter $\ell$.
		\item {\sc Sign}$(m, SK)$ receives the message $m$ to be signed, the secret key $SK$, and outputs the signature $\sigma$.
		\item {\sc Verify}$(m, \sigma, PK)$ takes as input a message $m$, signature $\sigma$, and public key $PK$. Using $PK$, outputs $1$ if the pair ($m, \sigma$) is valid (as in Definition~\ref{mtss:def:validCDSS}), and outputs 0 otherwise.
	\end{itemize}
\end{definition}	

\begin{definition}\label{mtss:def:validCDSS}
	Let $\Sigma$ be a CDSS as defined above, and let $(SK, PK)$ be a pair of secret and public keys. A pair of message and signature $(m, \sigma)$ is \emph{valid} if $\sigma$ is a valid output\footnote{We use ``valid output" instead of $\Sigma$.{\sc Sign}$(m,SK) = \sigma$ because the signing algorithm does not need to be deterministic.} of $\Sigma$.{\sc Sign}$(m,SK)$.
\end{definition}

Generally speaking, we say CDSS is \emph{unforgeable} if a signature that verifies using $PK$ can only be generated by the signer who has $SK$.
In more detail, we consider the model of \emph{adaptive chosen message attack}, where the attacker $\mathcal{A}$ adaptivelly chooses a list of messages $m_1, \ldots, m_q$, and requests the respective signatures $\sigma_1, \ldots, \sigma_q$ from an oracle $\mathcal{O}$. Given this information, we say the attacker performs an \emph{existential forgery} if he is able to produce a valid signature $\sigma$ on a new message $m$ chosen by him, $m \neq m_i, 1 \leq i \leq q$ \cite{mtss:menezesBook,mtss:stinsonBook}.
Because with unlimited time an adversary can perform all possible computations, we limit the computational power of the attacker by requiring an efficient algorithm to be one that runs in probabilistic polynomial time (PPT). Moreover, we say a function $f$ is \emph{negligible} if for every polynomial $p$ there exists an $N$ such that for all $n > N$ we have that $f(n)< 1/p(n)$~\cite{mtss:katzLindell}.

\begin{definition}[Unforgeability]\label{mtss:defsecurity}
	A digital signature scheme is existentially unforgeable under an adaptive chosen message attack if there is no PPT adversary $\mathcal{A}$ that can create an existential forgery with non-negligible probability.
\end{definition}

\subsection{Description of MTSS}

Now we define the algorithms of the modification tolerant signature scheme (MTSS).
We assume that the message $m$ to be signed has size $|m|$ in bits, and is split into $n$ \emph{blocks}, not necessarily of the same size. A message split into blocks is represented as a sequence $m = (m[1], \ldots, m[n])$, where $m[i]$ represents the $i$th block of message $m$. For two messages $m$ and $m'$, each split into $n$ blocks, we denote diff$(m,m') = \{i \in \{1, \ldots, n\}: m[i] \neq m'[i]\}$. An \emph{authorized modification structure} is a collection $\mathcal{S} \subseteq P(\{1, \ldots, n\})$, where $P$ is the power set of $\{1, \ldots, n\}$, that contains each set of blocks that the signer allows to be modified. The idea is that if $\sigma$ is a valid signature for $m$, then $\sigma$ is a valid signature for $m'$ if and only if diff$(m,m') \in \mathcal{S}$. Of course, to prevent inconsistencies we must have $\emptyset \in \mathcal{S}$. Indeed, an MTSS is equivalent to a CDSS if $\mathcal{S} = \{\emptyset\}$. A more general example is $\mathcal{S} = \{\emptyset, \{2\}, \{3\}, \{5\}, \{2,3\}, \{2,5\}, \{3,5\}\}$ for $n = 5$ blocks, which specifies blocks $1$ and $4$ \new{cannot} be changed and any change of at most two other blocks is allowed. The authorized modification structure is used to provide flexibility of modifications while providing control for the signer. In practice, we do not expect $\mathcal{S}$ to be stored explicitly, but instead to be implicitly enforced by the scheme.

\begin{definition}\label{mtss:def:MTSS}
	A modification tolerant signature scheme (MTSS) for authorized modification structure $\mathcal{S} \subseteq P(\{1, \ldots, n\})$ on messages with $n$ blocks is a tuple $\Sigma$ of three algorithms: 
	\begin{itemize}
		\item {\sc MTSS-KeyGeneration}$(\ell)$ generates a pair of secret and public keys ($SK, PK$) for a given security parameter $\ell$.
		\item {\sc MTSS-Sign}$(m, SK)$ receives the message $m$ to be signed, the secret key $SK$, and outputs the signature $\sigma$.
		\item {\sc MTSS-Verify}$(m, \sigma, PK)$ takes as input a message $m$, signature $\sigma$, and public key $PK$. Outputs $(1,I)$ if ($m, \sigma$) is valid for modification set $I$ (as in Definition~\ref{mtss:validsig}), and outputs 0 otherwise.
	\end{itemize}
\end{definition}	

\begin{definition}\label{mtss:validsig}
	Let $\Sigma$ be an MTSS for authorized modification structure $\mathcal{S} \subseteq P(\{1, \ldots, n\})$, and let $(SK, PK)$ be a pair of secret and public keys. A pair $(m, \sigma)$ of message and signature is \emph{valid} if there exists $m'$ such that $\sigma$ is a valid output\footnote{We use ``valid output" instead of $\Sigma$.{\sc MTSS-Sign}($m', SK$) = $\sigma$ because the signing algorithm does not need to be deterministic.} of $\Sigma$.{\sc MTSS-Sign}($m', SK$) and diff$(m,m') \in \mathcal{S}$. In this case, we say $(m, \sigma)$ is valid for modification set $I$ (where $I = \text{diff}(m, m')$).
\end{definition}

The definition of unforgeability of an MTSS scheme is exactly like Definition~\ref{mtss:defsecurity}, but the existential forgery now needs to produce a valid signature as given in Definition~\ref{mtss:validsig}. \new{This is in alignment with the notions introduced in~\cite{mtss:Derler}.} 

\begin{definition}
	An MTSS $\Sigma$ for authorized modification structure $\mathcal{S}\subseteq P(\{1, \ldots, n\})$ has \emph{correction capability} if it is a tuple $\Sigma$ of four algorithms, which, in addition to the algorithms in Definition~\ref{mtss:def:MTSS}, has the following algorithm:
	\begin{itemize}
		\item {\sc MTSS-Verify\&Correct}($m, \sigma, PK$): takes as input a message $m$, signature $\sigma$, and public key $PK$. Outputs a pair $(ver, m')$ where:
		\begin{enumerate}
			\item $ver = \Sigma$.{\sc MTSS-Verify}$(m, \sigma, PK)$.
			\item $m'$ is a message with $m' \neq \lambda$ (the corrected message) if $ver = (1, I)$, $I = \text{diff}(m, m')$, and $(m',\sigma)$ is a valid pair for modification set $I' = \emptyset$; in all other cases $m' = \lambda$, which indicates failure to correct.
		\end{enumerate}
	\end{itemize}
\end{definition}	

Location of modifications with MTSS would be trivial \new{for any} $\mathcal{S} \subseteq P(\{1, \ldots, n\})$ if the signer simply produced $\sigma$ as a tuple of $n$ signatures, one for each block. However, this would be extremely inefficient for a large number of blocks $n$. We must reconcile the objectives of having a large $\mathcal{S}$ and having a compact signature. Of course, the signature size depends on the security parameter, but once this is fixed, we would like the signature to grow moderately as a function of $n$. This motivates the following definition of compression ratio.

\begin{definition}
	An MTSS $\Sigma^n$ for messages with $n$ blocks and signature $\sigma$ with $|\sigma| \leq s(n)$ has compression ratio $\rho(n)$ if $\frac{n}{s(n)}$ is $O(\rho(n))$. 
\end{definition}	

The compression ratio measures how efficient our signature is with respect to the trivial scheme of keeping one signature per block, with $\rho(n) = O(1)$, supporting $\mathcal{S} = P(\{1, \ldots, n\})$. \new{Classical} signatures have $\rho(n) = n$ (best possible), but $\mathcal{S} = \{\emptyset\}$. 
In the next section we present a tradeoff, where $\rho(n) = \frac{n}{\log n}$  and $\mathcal{S}$ is the set of all sets of up to $d$ modifications, for fixed $d$. Indeed, when using cover-free families it is possible to have a compression ratio of $O(\frac{n}{d^2 \log n})$ using known CFF constructions~\cite{mtss:PR}, while \new{a} lower bound \new{on $s(n)$}~\cite{mtss:furedi} tells us we cannot \new{have $\rho(n)$ larger} than $\Theta(\frac{n}{(d^2/\log d) \log n})$.

\section{$d$-Modification Tolerant MTSS Based on Combinatorial Group Testing}\label{mtss:SecSchems}
Here we propose an MTSS that allows the modification of any set of up to $d$ blocks in the message, for a constant $d$ which we call a tolerance level. In other words, the authorized modification structure is $\mathcal{S} = \{T \subseteq \{1, \ldots, n\}: |T| \leq d \}$. To obtain a compact signature size, we rely on combinatorial group testing, which we summarize in Section~\ref{mtss:sec:CFFs} before we describe the scheme in Section~\ref{mtss:MTSSCFF}. Similar modification location methods based on combinatorial group testing have been proposed in \cite{mtss:forensics,mtss:thaisIPL}, and the instantiation we propose for the first three algorithms of MTSS (Section~\ref{mtss:MTSSCFF}) are based on~\cite{mtss:thaisIPL}. Our new contributions in this section include proof of security for the MTSS scheme based on cover-free families and the addition of error correction capability by proposing algorithm {\sc MTSS-Verify\&Correct} that corrects modifications in this context. 

\subsection{Cover-Free Families and Group Testing}\label{mtss:sec:CFFs}
Combinatorial group testing deals with discovering $d$ defective items in a set of $n$ items, via testing various groups of items for the presence of defects in each group. In nonadaptive combinatorial group testing, the groups are determined before the testing process starts. For the problems considered in this paper, a modified block is a defective item, and the groups are sets of blocks combined and hashed together. In our case, we must use nonadaptive combinatorial group testing, as the tests are generated at time of signing, while verification is done later. Cover-free families allow for a small number of groups that help to identify the modified blocks.

Recall that a permutation matrix of dimension $l$ is an $l \times l$ matrix that is obtained from permuting rows of the identity matrix.

\begin{definition}
	A $d$-cover-free family ($d$-CFF) is a $t \times n$ binary matrix $\mathcal{M}$ such that any set of $d+1$ columns contains a permutation submatrix of dimension $d+1$. 
\end{definition}

Each column of $\mathcal{M}$ corresponds to a block of the message, and each row corresponds to a group of blocks that will be tested together. A test fails if a group contains a modified block and passes if every block in a group is unchanged. 
The definition of $d$-CFF guarantees that for any column index $j$ and any other set of $d$ column \new{indices} $j_1, \ldots, j_d$, there will be a row $i$ s.t. $\mathcal{M}_{i,j} = 1$, and $\mathcal{M}_{i,j_1} = \ldots =  \mathcal{M}_{i,j_d} = 0$. In other words, for any unchanged block, there exists a test that contains that block but none of the up to $d$ modified blocks.

Figure~\ref{mtss:fig:cff} shows an example \new{of} how to use a $2$-CFF($9, 12$) to test a message with $12$ blocks (represented by the columns) by testing $9$ groups of blocks (the rows). Every unchanged block is in at least one group that passes the test (with result 0), and every group that failed the test (result \textbf{1}) \new{contains} at least one modified block, which in this example are blocks $3$ and $12$. We note that column ``result" is the bitwise-or of the columns corresponding to modified blocks 3 and 12.

\begin{figure}
	\begin{center}
		\begin{tabular}{c}
			\begin{tabular}{l|cccccccccccc|l}
				blocks & 1 & 2 & 3 & 4 & 5 &6 &7 &8 &9 &10& 11& 12& \\ \hline
				& \checkmark & \checkmark& X & \checkmark & \checkmark &\checkmark& \checkmark & \checkmark & \checkmark & \checkmark & \checkmark & X & result:\\ \hline
				$t_1$  & 1 & 0 & 0 &1  & 0 & 0 & 1 &0 &0 &1 &0 & 0 & 0\\
				$t_2$  & 1 & 0 & 0 & 0 &1  & 0 & 0 & 1 & 0 &  0 &  1&  0 &0\\
				$t_3$  & 1 & 0 & 0 & 0 & 0 & 1 & 0 &0 &1 &0 & 0 & 1 &\textbf{1}\\
				$t_4$  & 0 & 1 & 0 & 1 & 0 & 0 & 0 &0 &1 &0 & 1 & 0 & 0\\
				$t_5$  & 0 & 1 & 0 & 0 & 1 & 0 & 1 & 0 & 0 & 0 &  0 & 1&\textbf{1}\\
				$t_6$  & 0 & 1 & 0 & 0 & 0 & 1 & 0 & 1 & 0 & 1 & 0 &  0& 0\\
				$t_7$  & 0 & 0 & 1 & 1 & 0 & 0 & 0 & 1 & 0 & 0 & 0 & 1 & \textbf{1}\\
				$t_8$  & 0 & 0 & 1 & 0 & 1 & 0 & 0 & 0 & 1 & 1 & 0 & 0 & \textbf{1}\\
				$t_9$  & 0 & 0 & 1 & 0 & 0 & 1 & 1 &0 & 0 & 0 & 1 & 0  & \textbf{1}\
			\end{tabular}\\
		\end{tabular}
	\end{center}
	\caption{Example of a $2$-CFF($9, 12$).}
	\label{mtss:fig:cff}
\end{figure}

The next theorem is important for the efficiency of {\sc MTSS-Verify\&Correct}. Indeed, it ensures that for each modified block, we can find a test that contains it together with only other unchanged blocks. Therefore, an exhaustive trial-and-error can be used to guess this block until the hash of this group of blocks matches the original hash.

\begin{theorem}\label{mtss:uniqueT}
	Let $j_1, \ldots, j_d$ be the column \new{indices} that represent invalid elements. There is a row $i$ such that $\mathcal{M}_{i,j_1} = 1, \mathcal{M}_{i,j_2} = \ldots = \mathcal{M}_{i,j_d} = 0$.
\end{theorem}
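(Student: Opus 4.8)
The plan is to apply the definition of a $d$-CFF directly to the $d+1$ columns $j_1, \ldots, j_d$, together with any one extra column. Wait — we only have $d$ indices here, so first I would observe that the statement is really about a set of $d$ columns, not $d+1$. The natural move is to pad: pick any column index $j_{d+1} \in \{1,\ldots,n\} \setminus \{j_1,\ldots,j_d\}$ (such an index exists as long as $n > d$, which we may assume, since otherwise the signature scheme is trivial). Now I have $d+1$ distinct columns $j_1, \ldots, j_{d+1}$, and the $d$-CFF property guarantees that the $t \times (d+1)$ submatrix formed by these columns contains a permutation submatrix of dimension $d+1$.

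From that permutation submatrix I extract exactly what I need. A permutation submatrix of dimension $d+1$ means there are $d+1$ rows $i_1, \ldots, i_{d+1}$ such that the $(d+1)\times(d+1)$ matrix with entries $\mathcal{M}_{i_a, j_b}$ is a permutation matrix: each row and each column has exactly one $1$. In particular, look at the column corresponding to $j_1$: there is a unique row among $i_1, \ldots, i_{d+1}$ — call it $i$ — with $\mathcal{M}_{i, j_1} = 1$. Because that same row $i$ of the permutation matrix has its single $1$ in the $j_1$-column, we get $\mathcal{M}_{i, j_2} = \mathcal{M}_{i, j_3} = \cdots = \mathcal{M}_{i, j_{d+1}} = 0$, and in particular $\mathcal{M}_{i, j_2} = \cdots = \mathcal{M}_{i, j_d} = 0$. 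That row $i$ is the one the theorem asks for.

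The key steps in order: (1) note $n > d$ so a padding column $j_{d+1}$ exists; (2) apply the $d$-CFF definition to the $d+1$ columns $j_1, \ldots, j_{d+1}$ to obtain a $(d+1)$-dimensional permutation submatrix on rows $i_1, \ldots, i_{d+1}$; (3) take $i$ to be the row whose unique $1$ sits in the $j_1$-column and read off that $\mathcal{M}_{i,j_1}=1$ while $\mathcal{M}_{i,j_k}=0$ for $k = 2, \ldots, d$. I do not expect any real obstacle here — the only subtlety is the off-by-one between the $d$ invalid indices in the statement and the $d+1$ columns in the CFF definition, resolved by the padding argument. One should just make sure to remark that the case $n \le d$ is vacuous or irrelevant for the application (if every block could be modified there is nothing to locate), so the assumption $n > d$ is harmless.
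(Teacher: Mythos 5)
Your proof is correct, and it reaches the same final step as the paper --- picking out the row of a permutation submatrix whose unique $1$ sits in the $j_1$-column --- but it gets the submatrix by a slightly different route. The paper's one-line proof invokes the down-closure property (``a $d$-CFF is also a $(d-1)$-CFF'') and then applies the $(d-1)$-CFF condition to the $d$ columns $j_1,\ldots,j_d$, obtaining a $d\times d$ permutation submatrix directly on those columns. You instead avoid citing that lemma: you pad with an arbitrary extra column $j_{d+1}$ (using $n>d$), apply the $d$-CFF definition to the resulting $d+1$ columns, and extract the row whose $1$ lies in the $j_1$-column of the $(d+1)$-dimensional permutation submatrix. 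In fact, under the matrix definition the paper uses, your padding argument is exactly what justifies the paper's appeal to down-closure (and both arguments share the same harmless caveat $n \geq d+1$, which you make explicit and the paper leaves implicit), so your version is somewhat more self-contained while the paper's is more compact by leaning on a standard fact about cover-free families.
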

\begin{proof}
	Since $\mathcal{M}$ is a $d$-CFF, it is also a $(d-1)$-CFF, therefore one of the rows in the permutation submatrix indexed by $j_1, \ldots, j_d$ is as stated. 
\end{proof}

\subsection{Description of $d$-Modification Tolerant Signature Scheme}\label{mtss:MTSSCFF}

The main idea of a $d$-modification tolerant signature scheme is to sign a message split into blocks by concatenating the hashes of these blocks according to a $d$-CFF matrix. This allows us to locate up to $d$ modified blocks in the signed message, and correct these modifications. In this context, we represent a concatenation of two strings $a$ and $b$ as $a||b$, and $\lambda$ represents an empty string.

\begin{definition}
	A $d$-modification tolerant signature scheme ($d$-MTSS) is an MTSS with authorized modification structure $\mathcal{S} = \{T \subseteq \{1, \ldots, n\}: |T| \leq d \}$.
\end{definition}
We now give an instantiation of $d$-MTSS using $d$-cover-free families based on~\cite{mtss:thaisIPL}.

\vspace{10pt}
\noindent
\textbf{Scheme 1: A $d$-Modification Tolerant Signature Scheme}
\vspace{5pt}

The scheme requires an underlying CDSS $\Sigma$, a public hash function $h$, and a $d$-CFF($t,n$) matrix $\mathcal{M}$.
The algorithms are given next:
\begin{itemize}
	\item {\sc MTSS-KeyGeneration}$(\ell)$: generates a key pair ($SK, PK$) using algorithm \\$\Sigma$.{\sc KeyGeneration}$(\ell)$.
	\item {\sc MTSS-sign}($m, SK$): Takes as input a secret key $SK$ and a message \\$m = (m[1], \ldots, m[n])$, and proceeds as follows.
	\begin{enumerate}
		\item Calculate $h_j = h(m[j]), 1 \leq j \leq n$.
		\item For each $1 \leq i \leq t$, compute $c_i$ as the concatenation of all $h_j$ such that $\mathcal{M}_{i,j} = 1, 1 \leq j \leq n$. Set $T[i] = h(c_i)$.
		\item Compute $h^* = h(m)$ and set $T = (T[1], T[2], \ldots, T[t], h^*)$.
		\item Calculate $\sigma' = \Sigma$.{\sc Sign}($T, SK$). Output signature $\sigma = (\sigma', T)$.
	\end{enumerate}
	\item {\sc MTSS-verify}($m, \sigma, PK$): takes as input a message $m = (m[1], \ldots, m[n])$, signature $\sigma = (\sigma', T)$ for $T = (T[1], T[2], \ldots, T[t], h^*)$, and public key $PK$, \new{and} proceeds as follows.
	\begin{enumerate}
		\item Ensure that $\Sigma$.{\sc Verify}$(T, \sigma', PK) = 1$, otherwise stop and output $(0, -)$.
		\item Check if $h^* = h(m)$. Stop and output $(1, \emptyset)$ if that is the case, continue otherwise.
		\item Use $\mathcal{M}$ and $m$ and do the same process as in steps 1 and 2 of {\sc MTSS-sign} to produce hashes $T'[1], \ldots, T'[t]$.
		\item Start with an empty set $V$, and for each $1 \leq i \leq t$ such that $T[i] = T'[i]$, compute the set of indices of unmodified blocks $V_i = \{j: \mathcal{M}_{i,j} = 1\}$, and accumulate these values in the set of all indices of unmodified blocks $V = V \cup V_i$. Compute $I = \{1, \ldots, n\}\setminus V$. If $|I| \leq d$ output $(1,I)$, else output $(0,I)$. 
	\end{enumerate}
\end{itemize}

The correctness of the scheme is shown next.

\begin{theorem}\label{mtss:verifyCorrect} Consider a valid signature $\sigma$ generated by {\sc MTSS-Sign} for a message $m$ and key pair ($SK, PK$), and let $m'$ be a possibly modified message with $|\text{diff}(m, m')| \leq d$. Then, {\sc MTSS-Verify}($m', \sigma, PK$) = $(1, \text{diff}(m, m'))$.
\end{theorem}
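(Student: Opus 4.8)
The plan is to trace through the {\sc MTSS-Verify} algorithm on input $(m', \sigma, PK)$ and show each step produces the expected outcome, culminating in the output $(1, \text{diff}(m,m'))$. Write $D = \text{diff}(m,m')$, so $|D| \le d$ by hypothesis. Since $\sigma = (\sigma', T)$ was produced by {\sc MTSS-Sign} on $m$, the tuple $T = (T[1], \ldots, T[t], h^*)$ consists of the honestly computed CFF-hashes of $m$ together with $h^* = h(m)$, and $\sigma' = \Sigma.${\sc Sign}$(T, SK)$. Hence step 1 of {\sc MTSS-Verify} calls $\Sigma.${\sc Verify}$(T, \sigma', PK)$, which returns $1$ by correctness of the underlying CDSS; the algorithm does not stop here.

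Next I would handle step 2: the algorithm checks whether $h^* = h(m')$. There are two cases. If $D = \emptyset$, then $m' = m$, so $h(m') = h(m) = h^*$, and the algorithm outputs $(1, \emptyset) = (1, D)$ as required. If $D \neq \emptyset$, then I claim $h(m') \neq h^*$ — but strictly this is where a subtlety enters: $h$ is only collision resistant, so in principle $h(m') = h(m)$ could hold even when $m' \neq m$. I would address this the same way the paper treats such situations elsewhere (e.g.\ in the correction scheme): either argue that in the correctness statement we implicitly assume no collision occurs, or observe that if $h(m') = h^*$ the algorithm merely outputs $(1,\emptyset)$ instead, which is still a ``valid'' verdict though not the tightest localization. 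For the clean statement, I would proceed under the assumption that the modification does produce $h(m') \ne h^*$, so the algorithm continues to steps 3 and 4.

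The heart of the argument is step 4, where I must show the computed set $I$ equals $D$. Step 3 recomputes $T'[1], \ldots, T'[t]$ from $m'$ exactly as in signing. First I show $I \subseteq D$: if index $j \notin D$, then $m'[j] = m[j]$, and by Theorem~\ref{mtss:uniqueT} (applied with the at most $d$ indices of $D$, or directly by the $d$-CFF property) there is a row $i$ with $\mathcal{M}_{i,j} = 1$ and $\mathcal{M}_{i,k} = 0$ for every $k \in D$; for that row, $c_i$ is a concatenation of hashes of blocks all lying outside $D$, so $c_i$ computed from $m'$ equals $c_i$ computed from $m$, giving $T'[i] = T[i]$, hence $j \in V_i \subseteq V$ and $j \notin I$. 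Conversely $D \subseteq I$: if $j \in D$, then for every row $i$ with $\mathcal{M}_{i,j} = 1$, the concatenation $c_i$ from $m'$ differs from that of $m$ in the $h(m'[j])$ versus $h(m[j])$ entry — and here again, modulo collision resistance of $h$ on the block values and on the concatenations, $T'[i] \neq T[i]$ — so $j$ is never added to $V$ via such a row, and since every row $i'$ that does satisfy $T[i'] = T'[i']$ has $\mathcal{M}_{i',j} = 0$ (as its $c_{i'}$ must have been unaffected, which forces $j \notin$ its support), we get $j \notin V$, i.e.\ $j \in I$. Thus $I = D$, and since $|I| = |D| \le d$, step 4 outputs $(1, I) = (1, \text{diff}(m,m'))$, completing the proof.

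The main obstacle is the repeated reliance on collision resistance of $h$ to conclude $T'[i] \ne T[i]$ whenever a relevant block changed: as stated, the theorem is really a statement ``up to hash collisions,'' and I would either fold this into the correctness hypothesis explicitly or note that any collision event is absorbed into the negligible failure probability and does not affect the security theorem proved later in Section~\ref{mtss:SecSecurity}. Everything else is a direct, essentially bookkeeping, application of the defining $d$-CFF property already packaged as Theorem~\ref{mtss:uniqueT}.
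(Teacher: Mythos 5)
Your proposal is correct and follows essentially the same route as the paper's own proof: step 1 passes by correctness of the underlying CDSS, step 2 handles the $\text{diff}(m,m')=\emptyset$ case, and step 4 uses the $d$-CFF property in both directions (every unmodified block lies in some matching row avoiding all modified blocks; every row containing a modified block fails to match). Your explicit remark that the conclusion $T[i]\neq T'[i]$ for rows touching modified blocks rests on the absence of hash collisions is a point the paper's proof leaves implicit, so it is a slightly more careful rendering of the same argument rather than a different one.
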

\begin{proof}
	Since $\sigma$ is valid, {\sc MTSS-Verify}($m', \sigma, PK$) does not stop at step 1. If $m=m'$, then $h(m) = h(m')$ and the algorithm will stop in step 2, with $(1, \text{diff}(m, m') = \emptyset)$. It remains to check the case it stops in step 4.
	The $d$-CFF property guarantees that if a block has not been modified, it is contained in at least one valid concatenation with $T[i] = T'[i]$, since there is a row $i$ that avoids all modified blocks. Therefore, this block is contained in $V_i$. For each row $i$ that contains a modified block, we have $T[i] \neq T'[i]$, so modified blocks are not contained in any $V_i$. Therefore $I$ consists of precisely the modified blocks. Thus, $|I| \leq d$, and the algorithm outputs $(1, I = \text{diff}(m, m'))$.  \qed
\end{proof}

The next theorem shows that when step 4 outputs $(0,I)$, Scheme 1 may give more information than required in Definition~\ref{mtss:def:MTSS}, as it may identify some unmodified blocks, even if not all.

\begin{theorem}\label{mtss:unmod}
	Consider a valid signature $\sigma$ generated by {\sc MTSS-Sign} for a message $m$ and key pair ($SK, PK$), and let $m'$ be a modified message with $|\text{diff}(m, m')|$ $> d$. Then, {\sc MTSS-Verify}($m', \sigma, PK$) = $(0, I)$, and for any $i \in \{1, \ldots, n\} \setminus I$, block $m[i]$ is guaranteed to be unmodified.
\end{theorem}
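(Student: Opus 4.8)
The plan is to trace through the {\sc MTSS-verify} algorithm on input $(m', \sigma, PK)$ and argue two things separately: first that the output is indeed of the form $(0,I)$, and second that every index outside $I$ corresponds to a provably unmodified block. For the first part, since $\sigma$ is a valid signature for $m$, the verification does not stop at step~1; moreover $|\text{diff}(m,m')| > d$ means in particular $m \neq m'$, so (assuming $h$ is collision-free on the relevant inputs, or simply $h(m) \neq h(m')$) the algorithm does not stop at step~2 either. Hence it reaches step~4 and computes the set $V$ of indices it certifies as unmodified, then outputs $(1,I)$ or $(0,I)$ with $I = \{1,\dots,n\}\setminus V$ according to whether $|I| \le d$. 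I would then observe that $I \supseteq \text{diff}(m,m')$ always holds (a genuinely modified block can never land in any $V_i$, since a row $i$ containing it has $T[i] \neq T'[i]$), so $|I| \ge |\text{diff}(m,m')| > d$, forcing the output to be $(0,I)$.

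For the second part, the key observation is the meaning of membership in $V$: an index $i$ is placed in $V$ exactly when there is some row $r$ of $\mathcal{M}$ with $\mathcal{M}_{r,i}=1$ and $T[r]=T'[r]$. I would argue that this equality of hash values certifies that every block indexed by row $r$ is unmodified. Indeed $T[r]$ is the hash of the concatenation $c_r$ of the original block-hashes $h(m[j])$ over $\{j:\mathcal{M}_{r,j}=1\}$, while $T'[r]$ is the hash of the analogous concatenation $c_r'$ built from $m'$; if $T[r]=T'[r]$ then, by collision resistance of $h$, we conclude $c_r = c_r'$, hence $h(m[j]) = h(m'[j])$ for all $j$ in the row, and again by collision resistance $m[j]=m'[j]$ for each such $j$ — in particular $m[i]=m'[i]$. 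Therefore any $i \notin I$, i.e. $i \in V$, has $m[i]$ unmodified, which is the claim.

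The main obstacle — really the only subtlety — is the dependence on collision resistance of $h$: strictly speaking the statement as phrased ("guaranteed to be unmodified") should be read as holding except with negligible probability, or under the idealized assumption that the adversary / modification process does not produce a hash collision. I would handle this by stating the argument in the contrapositive: if some block $m[i]$ with $i \notin I$ were in fact modified, then either a block-level hash collision $h(m[j]) = h(m'[j])$ with $m[j] \neq m'[j]$ occurred, or a concatenation-level collision occurred, both of which contradict collision resistance of $h$. This mirrors exactly the style of reasoning already used implicitly in the correctness proof of Theorem~\ref{mtss:verifyCorrect} and will be made fully rigorous in the security section; here it suffices to note that the combinatorial content (the role of the $d$-CFF structure and the definition of $V$) is what does the work, and the hash assumption is the standard one invoked throughout.
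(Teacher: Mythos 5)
Your proposal is correct and follows essentially the same route as the paper's proof: the verification algorithm reaches step~4, modified blocks never enter any $V_i$ so $I \supseteq \text{diff}(m,m')$ and $|I| > d$, forcing the output $(0,I)$, and any index outside $I$ belongs to a matching row and is therefore unmodified. The only difference is that you spell out the collision-resistance caveat that the paper leaves implicit (here and in Theorem~\ref{mtss:verifyCorrect}), which is a clarification rather than a different argument.
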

\begin{proof}
	This case will lead {\sc MTSS-Verify} to step 4, and since $|\text{diff}(m, m')|>d$, the output will be $(0,I)$. Any block in $\{1, \ldots, n\} \setminus I$ is guaranteed to be part of matching row $i$, and must be unmodified, even though not every unmodified block will necessarily be placed in $\{1, \ldots, n\} \setminus I$. \qed
\end{proof}	

Scheme 1 has been proposed in~\cite{mtss:thaisIPL}. One of our main contributions here is to add correcting capability to $d$-MTSS. We require the size of each block to be upper bounded by a value $s$ that is small enough that guessing each of the (up to) $d$ modified blocks is ``computationally feasible". Basically, by brute force we compute up to $O(d2^s+n)$ hashes to accomplish the modification correction (see the algorithm under Scheme 2).
We use the indices of the modified blocks in $I$ and do an exhaustive search to recover their original values.

\vspace{10pt}
\noindent
\textbf{Scheme 2: A $d$-MTSS with Correction Capability}
\vspace{5pt}

The scheme requires an underlying CDSS $\Sigma$, a public hash function $h$, and a $d$-CFF($t,n$) matrix $\mathcal{M}$. It further requires that the message is divided into $n$ blocks of size at most $s$. The scheme has the three algorithms from Scheme 1, and additionally the algorithm below:

\begin{itemize}
	\item {\sc MTSS-Verify\&Correct}($m, \sigma, PK$): receives as input a message \\$m = (m[1], \ldots, m[n])$, a signature $\sigma = (\sigma', T)$ where $T = (T[1], T[2], \ldots, T[t], h^*)$, and a public key $PK$, \new{and} proceeds as follows.
	\begin{enumerate}
		\item Compute \emph{result} = {\sc MTSS-verify}($\new{m}, \sigma, PK$). If \emph{result} = $(0, X)$, then stop and output $(0,X)$, otherwise \emph{result} = ($1,I$). If $|I| = 0$ go to step 6, otherwise run steps $2-5$ for each $k \in I$. 
		\item Identify a row $i$ in $\mathcal{M}$ such that $\mathcal{M}_{i,k} = 1$ and $\mathcal{M}_{i,j} = 0$, for all $j \in I \setminus \{k\}$. 
		\item Compute $h_j = h(m[j])$ for all $j$ such that $\mathcal{M}_{i,j} = 1, j \neq k$, $i$ from step 2. Set \emph{corrected[k] = false}.
		\item For every possible binary string $b$ of size $\leq s$, proceed as follows:
		\begin{itemize}
			\item Compute $h_k = h(b)$.
			\item For $i$ obtained in step 2 and $1 \leq j \leq n$, compute $c_i$ as the concatenation of every $h_j$ such that $\mathcal{M}_{i,j} = 1$ and set $T'[i] = h(c_i)$.
			\item If $T'[i] = T[i]$ and \emph{corrected[k] = false}, set \emph{corrected[k] = true} and correct the block $m[k] = b$. 
			\item Else, if $T'[i] = T[i]$ and \emph{corrected[k] = true}, stop and output ($1, I, \lambda$).
		\end{itemize} 
		\item Return to step 2 with the next $k \in I$.
		\item Output $(1, I, m)$.
	\end{enumerate}
\end{itemize}

We note that the flag $corrected[k]$ is used to identify a possible collision of two different bit strings $b$ giving the same hash value $h(m[k])$. Since the correct block cannot be determined, we exit with $\lambda$ indicating failure.
The next proposition has details on the correctness of this algorithm.

\begin{proposition}\label{mtss:propcorr}
	Let $(m, \sigma)$ be a valid pair of message and signature produced by {\sc MTSS-Sign}($m, SK$), using a hash function $h$ and with $m = (m[1], \ldots, m[n])$. Let $m'$ be a message and let $I = \text{diff}(m, m')$ with $|I|\leq d$. If for every $k \in I$, $h(m[k])$ has no other preimage of size up to $s$, then \\{\sc MTSS-Verify\&Correct}($m', \sigma, PK$) = $(1, I, m)$.
\end{proposition}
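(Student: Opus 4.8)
The plan is to verify that, under the stated hypothesis on $h$, the four-algorithm trace of \textsc{MTSS-Verify\&Correct}($m', \sigma, PK$) terminates at step 6 with output $(1, I, m)$. First I would invoke Theorem~\ref{mtss:verifyCorrect}: since $(m,\sigma)$ is valid and $|I| = |\text{diff}(m,m')| \le d$, step 1 computes \emph{result} $= \textsc{MTSS-verify}(m', \sigma, PK) = (1, I)$, so we do not stop at step 1 and we enter the loop over $k \in I$ (if $I = \emptyset$ we jump to step 6 and output $(1, \emptyset, m)$, which is the degenerate case $m' = m$). Then, for each fixed $k \in I$, Theorem~\ref{mtss:uniqueT} applied to the column indices in $I$ (with $k$ playing the role of $j_1$) guarantees that step 2 succeeds in finding a row $i$ with $\mathcal{M}_{i,k} = 1$ and $\mathcal{M}_{i,j} = 0$ for every $j \in I \setminus \{k\}$.

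The core of the argument is the analysis of steps 3--4 for this fixed $k$ and its associated row $i$. The key observation is that, because row $i$ avoids every index in $I \setminus \{k\}$, every block $m'[j]$ with $\mathcal{M}_{i,j} = 1$ and $j \neq k$ satisfies $j \notin I$, hence $m'[j] = m[j]$; so the hashes $h_j = h(m'[j]) = h(m[j])$ computed in step 3 coincide with the ones used when the signature was produced. Consequently, when step 4 reaches the binary string $b = m[k]$ (which has size at most $s$ by the block-size restriction, so it is enumerated), the concatenation $c_i$ rebuilt in step 4 equals exactly the concatenation that \textsc{MTSS-Sign} used for row $i$ on the original message $m$, whence $T'[i] = h(c_i) = T[i]$. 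At that point \emph{corrected[k]} is set to \emph{true} and $m[k]$ is (re)installed into the block $k$. I would then argue that no \emph{spurious} $b \neq m[k]$ of size $\le s$ can also produce $T'[i] = T[i]$: such a $b$ would force $h(b) = h(m[k])$ (since the only slot in $c_i$ that changes with $b$ is the $h_k$ entry, and equality of hashed concatenations $h(c_i) = T[i] = h(\text{original }c_i)$ together with the no-collision assumption on the outer hash forces the inner strings to agree, hence the $h_k$ components agree), contradicting the hypothesis that $h(m[k])$ has no other preimage of size up to $s$. Therefore the branch ``\emph{corrected[k]} = \emph{true}'' that outputs $(1, I, \lambda)$ is never taken, and the loop body for $k$ ends with block $k$ correctly set to $m[k]$.

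Running this over all $k \in I$ restores every modified block to its original value while leaving the unmodified blocks (those outside $I$) untouched, so after step 5 exhausts $I$ the working message equals $m$; step 6 then outputs $(1, I, m)$, as claimed. I would also note in passing that this is consistent with the specification of \textsc{MTSS-Verify\&Correct}: the returned $m$ satisfies $\text{diff}(m', m) = I$ and, being the originally signed message, $(m, \sigma)$ is valid for $I' = \emptyset$.

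The step I expect to require the most care is the uniqueness/no-spurious-correction claim in the middle paragraph: one must argue cleanly that equality of the \emph{outer} hash values $h(c_i) = T[i]$, combined with the fact that the only varying sub-block of $c_i$ is the $h_k = h(b)$ component, propagates back through collision-resistance of $h$ to force $h(b) = h(m[k])$, and hence (by the hypothesis) $b = m[k]$. This is where the assumption ``$h(m[k])$ has no other preimage of size up to $s$'' is used in an essential way, and it is worth being explicit that the argument needs \emph{no} collision on the (short) block inputs but may freely assume collision-resistance on the larger concatenation inputs $c_i$ — matching the discussion in the introduction that this mild, testable assumption is what separates Scheme~2 from Scheme~1.
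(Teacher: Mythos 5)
Your proof is correct and follows essentially the same route as the paper's: Theorem~\ref{mtss:verifyCorrect} gives that step 1 returns $(1,I)$ with $I$ exactly the modified indices, Theorem~\ref{mtss:uniqueT} supplies the row $i$ in step 2, and the no-second-preimage hypothesis forces a unique successful replacement $b=m[k]$ in step 4, so the algorithm reaches step 6 with output $(1,I,m)$. Your explicit treatment of possible spurious matches via collisions on the outer concatenations $c_i$ is in fact more careful than the paper's one-line justification, which tacitly assumes such outer-level collisions do not occur.
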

\begin{proof}
	As seen in Theorem~\ref{mtss:verifyCorrect}, the set $I$ contains precisely the \new{indices} of the modified blocks, and Theorem~\ref{mtss:uniqueT} guarantees that such a row $i$ in step $2$ of the algorithm exists.
	Finally, if for every $k \in I$ the hash $h(m[k])$ has no second preimage, 
	then step $4$ of the algorithm computes a unique replacement for each modified block, and the algorithm outputs the corrected message in step $6$. \qed
\end{proof} 

Now we prove that when selecting a good hash function, we can always guarantee the correction of any up to $d$ modified blocks.

\begin{theorem}\label{mtss:theocollision}
	Consider Scheme 2 restricted to messages with blocks of size at most $s$, and such that $h$ is a hash function where no two inputs of size up to $s$ have the same hash value. Then, {\sc MTSS-Verify\&Correct}($m', \sigma, SK$) can always correct a message with up to $d$ modified blocks.
\end{theorem}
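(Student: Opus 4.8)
The plan is to show that under the stated hypothesis on $h$, the algorithm {\sc MTSS-Verify\&Correct} from Scheme~2 always reaches step~6 and outputs the corrected message, i.e.\ it never exits via the failure case $(1,I,\lambda)$ in step~4 and never stops early with $(0,X)$ in step~1 (given the premise that at most $d$ blocks were modified). The key observation is that the hypothesis ``no two inputs of size up to $s$ have the same hash value'' is exactly the global, input-independent strengthening of the per-block condition in Proposition~\ref{mtss:propcorr}, so most of the work has already been done there; what remains is to verify that the hypothesis of Proposition~\ref{mtss:propcorr} is automatically satisfied for \emph{every} message with blocks of size at most $s$.

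First I would fix a valid pair $(m,\sigma)$ produced by {\sc MTSS-Sign}, and a received message $m'$ with $|\text{diff}(m,m')| \le d$, where both messages have all blocks of size at most $s$. By Theorem~\ref{mtss:verifyCorrect}, step~1 returns $(1,I)$ with $I = \text{diff}(m,m')$, so the algorithm does not abort in step~1 and proceeds to the correction loop for each $k \in I$. By Theorem~\ref{mtss:uniqueT}, for each such $k$ the row $i$ required in step~2 exists. Then I would invoke the hypothesis on $h$: since $m[k]$ has size at most $s$, and $h$ is injective on inputs of size up to $s$, the value $h(m[k])$ has no \emph{second} preimage of size up to $s$; hence when step~4 enumerates binary strings $b$ of size $\le s$, the only $b$ for which the recomputed $T'[i]$ equals $T[i]$ is $b = m[k]$ itself (here one also uses that the other blocks in row $i$ are unmodified, so their hashes $h_j$ agree with the originally signed ones, and that $h$ is injective on the concatenated string — or, more simply, that $T[i]$ was computed from the true blocks and matching forces agreement block-by-block via collision-freeness). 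Consequently \emph{corrected}$[k]$ is set to \emph{true} exactly once, the failure branch is never triggered, and $m[k]$ is recovered correctly. Since this holds for every $k \in I$, the loop terminates and step~6 outputs $(1,I,m)$.

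Equivalently, and more economically, I would simply remark that the hypothesis of Theorem~\ref{mtss:theocollision} implies the hypothesis of Proposition~\ref{mtss:propcorr} for any message whose blocks have size at most $s$: ``for every $k \in I$, $h(m[k])$ has no other preimage of size up to $s$'' is an immediate consequence of global injectivity on inputs of size $\le s$. Therefore Proposition~\ref{mtss:propcorr} applies verbatim and gives {\sc MTSS-Verify\&Correct}$(m',\sigma,PK) = (1,I,m)$, which is the desired conclusion.

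The only genuinely delicate point — and the step I expect to need the most care — is the matching argument inside step~4: one must argue that $T'[i] = T[i]$ forces the \emph{guessed} block to equal the original block $m[k]$, not merely that some string hashing into a colliding concatenation exists. This uses two facts together: (a) the non-guessed blocks appearing in row $i$ are all unmodified (guaranteed by the choice of $i$ in step~2 via Theorem~\ref{mtss:uniqueT}), so their contributions to $c_i$ are the original hash values; and (b) injectivity of $h$ on short inputs applied first to the block $b$ (so $h(b) = h(m[k]) \Rightarrow b = m[k]$ when $b$ has size $\le s$) — strictly speaking $h$ applied to the concatenation $c_i$ need not have short input, but we never need injectivity there, since identical block-hashes already yield identical $c_i$ and hence identical $T'[i]$. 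Spelling this out carefully closes the argument; everything else is a direct citation of the earlier results.
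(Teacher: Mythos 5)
Your proposal is correct and takes essentially the same route as the paper, whose proof is exactly the one-line reduction you give: injectivity of $h$ on all inputs of size at most $s$ implies the per-block no-second-preimage hypothesis of Proposition~\ref{mtss:propcorr}, which then yields $(1,I,m)$. Your closing aside is the only quibble: the claim that injectivity ``is never needed'' for $h(c_i)$ covers only the easy direction (the correct guess $b=m[k]$ reproduces $T[i]$), whereas ruling out a spurious match by a wrong $b$ would additionally require that no collision occurs among the concatenations $c_i$ --- a point the paper and the proof of Proposition~\ref{mtss:propcorr} gloss over in exactly the same way, so it is not a gap relative to the paper's argument.
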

\begin{proof}
	Easily obtained by Proposition~\ref{mtss:propcorr} since no matter what is the value of the block, no other block of size up to $s$ can have the same image under $h$. \qed
\end{proof}	

Next, we show that the assumption of existence of such hash function is realistic, and in fact very easy to find.

\begin{proposition}\label{mtss:probcoll}
	Consider a family of hash functions $h:X\rightarrow Y$ where $|Y|=2^{l}$ and a subset of the inputs $S \subseteq X$ where $|S|\leq 2^{s+1}$. The probability that there is collision in $S$, i.e. the probability that there exists x,z in S with $h(x)=h(z)$, is approximately $\epsilon = 1- e^{-2^{2s-l+1}}$.  
\end{proposition}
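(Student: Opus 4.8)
The plan is to treat this as an instance of the classical birthday problem under the standard idealized model in which, for a hash function $h$ drawn from the family, the values $h(x)$ for distinct $x\in S$ behave as independent random variables, each uniformly distributed over $Y$. Write $k=|S|$. Since the collision probability is monotone nondecreasing in $k$, it suffices to carry out the computation for the extreme value $k=2^{s+1}$ permitted by the hypothesis; this yields $\epsilon$ as (approximately) the worst case over all admissible $S$.

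First I would compute the probability that $S$ is \emph{collision-free}. Ordering the elements of $S$ arbitrarily and revealing their hashes one at a time, the $i$-th hash avoids the $i-1$ previously seen values with probability $1-(i-1)/|Y|$, so
\[
\Pr[\text{no collision in }S]=\prod_{i=1}^{k}\Bigl(1-\frac{i-1}{2^{l}}\Bigr)=\prod_{j=1}^{k-1}\Bigl(1-\frac{j}{2^{l}}\Bigr).
\]
Next I would apply the approximation $1-x\approx e^{-x}$, valid because each $j/2^{l}$ is extremely small in the regime $2^{s}\ll 2^{l}$ of interest, to obtain
\[
\Pr[\text{no collision in }S]\approx \exp\!\Bigl(-\sum_{j=1}^{k-1}\frac{j}{2^{l}}\Bigr)=\exp\!\Bigl(-\frac{k(k-1)}{2^{l+1}}\Bigr)\approx \exp\!\Bigl(-\frac{k^{2}}{2^{l+1}}\Bigr).
\]
Finally, substituting $k=2^{s+1}$ makes the exponent $-2^{2s+2}/2^{l+1}=-2^{2s-l+1}$, hence $\Pr[\text{collision in }S]=1-\Pr[\text{no collision in }S]\approx 1-e^{-2^{2s-l+1}}=\epsilon$, as claimed.

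The only genuine subtlety, and the step I would handle most carefully, is justifying the two approximations. Replacing $\prod_j(1-j/2^{l})$ by $\exp(-\sum_j j/2^{l})$ incurs an error in the exponent of order $\sum_j (j/2^{l})^{2}=O(k^{3}/2^{2l})$, which is negligible whenever $k^{3}\ll 2^{2l}$, i.e. roughly $3s\lesssim 2l$; and dropping the lower-order term replaces $k(k-1)$ by $k^{2}$, an $O(1/k)$ relative change in the exponent. Both are harmless in the parameter range relevant to the scheme, and since the statement only asserts the value ``approximately,'' a brief remark to this effect is all that is needed — I would not pursue sharp error bounds. I would also flag explicitly that the entire estimate rests on the idealized-hash (uniform, independent outputs) assumption, exactly as in the standard birthday-attack analysis.
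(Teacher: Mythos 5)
Your proposal is correct and takes essentially the same approach as the paper: the paper's proof simply invokes the standard birthday-bound formula $1-e^{-Q^{2}/2^{l+1}}$ with $Q=|S|=2^{s+1}$, which is exactly the formula you derive (under the usual idealized uniform/independent hash model) before substituting to get $1-e^{-2^{2s-l+1}}$. You supply the derivation and the error-term discussion that the paper leaves as a citation of the classical birthday analysis, but the underlying argument is the same.
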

\begin{proof}
	This comes from the fact that the probability of finding at least one collision after $Q$ hash calculations is approximately $1-e^{-\frac{Q^2}{2^{l+1}}}$, and in our application $Q = |S| = 2^{s+1}$. 
\end{proof}

In practice, we will set $2s \ll l$. This ensures via Proposition~\ref{mtss:probcoll} that $h$ is very likely to have the desired property of being injective on $S$, the set of binary strings with size at most $s$. In the unlikely event that $h$ fails this property, we try another hash function until we succeed. The expected number of attempts will be very close to 1.
For example, if we consider SHA-$256$ as the hash function, $|Y| = 2^{256}$, and for $s = 20$, the probability of a collision within the set of size at most $20$ is $\epsilon = 1- e^{-2^{40-256+1}} \approx 3.70 \times 10^{-68}$. Indeed, we experimentally verified that SHA-$256$ has no collisions for $s = 20$.

\begin{theorem}
	Consider Scheme 2 with tolerance level $d$, and let $m$ be a message split into $n$ blocks, each block of size at most $s$. Let $\sigma$ = {\sc MTSS-Sign}($m, SK$) and $m'$ be a message with $I = \text{diff}(m,m')$ with $|I|\leq d$.  Then, {\sc MTSS-Verify\&Cor-} {rect}($m', \sigma, PK$) returns $(1, I, m)$ and uses $O(n + d^2 \log n + d2^{s})$ hash computations.
\end{theorem}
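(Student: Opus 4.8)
I would split the statement into a \emph{correctness} claim (the output is $(1,I,m)$) and a \emph{complexity} claim (the number of hash evaluations), handling the first by appealing to the earlier results and putting the real work into a careful per-stage count of hash calls.

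For correctness I would keep things short: since $|I|\le d$, Theorem~\ref{mtss:verifyCorrect} says the internal call {\sc MTSS-verify}$(m',\sigma,PK)$ in step~1 returns $(1,I)$ with $I=\text{diff}(m,m')$ the exact set of modified indices, so the algorithm enters the correction loop rather than aborting. For each $k\in I$, Theorem~\ref{mtss:uniqueT} applied to the $d$ modified columns guarantees a row $i$ as required in step~2 exists, and that every $j$ with $\mathcal{M}_{i,j}=1$, $j\neq k$, is unmodified (so $m'[j]=m[j]$). Trying $b=m[k]$ in step~4 then rebuilds the original concatenation $c_i$, hence $T'[i]=T[i]$; and under the hypothesis that $h$ is injective on inputs of size $\le s$ (which Scheme~2 maintains, cf.\ Theorem~\ref{mtss:theocollision} and Proposition~\ref{mtss:propcorr}), no other $b$ of size $\le s$ does, so \emph{corrected}[k] is set exactly once, to $m[k]$, and the collision-failure branch is never taken. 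After all $k\in I$ are processed, step~6 outputs $(1,I,m)$. This is essentially a restatement of Proposition~\ref{mtss:propcorr}, to which I would simply point.

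For the complexity I would count hash evaluations stage by stage, treating one call $h(\cdot)$ as one evaluation regardless of input length (equivalently, charging $O(k)$ for a $k$-block input, which only inflates the $O(n)$ term). Stage~(i): the call {\sc MTSS-verify} in step~1 computes the $n$ block hashes $h(m'[j])$, the $t$ row hashes $T'[i]=h(c_i)$, and the single full-message hash $h(m')$; with the CFF construction giving $t=O(d^2\log n)$ this is $O(n+d^2\log n)$ evaluations. Stage~(ii): selecting the isolating row in step~2 and the set manipulations are purely combinatorial (polynomial time, no hashing). Stage~(iii): step~3 recomputes $h(m'[j])$ only for the unmodified blocks lying in the chosen row, and these values already appeared in stage~(i) and can be cached, so the total over all $\le d$ values of $k$ is absorbed into the $O(n)$ of~(i). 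Stage~(iv): step~4 ranges over all binary strings of length at most $s$; summing the geometric series there are $2^{s+1}-1=O(2^s)$ of them, and each incurs $O(1)$ hash evaluations (namely $h(b)$ and the updated $h(c_i)$), for $O(2^s)$ per $k$ and $O(d2^s)$ in total. Adding (i)--(iv) yields $O(n+d^2\log n+d2^s)$ hash evaluations.

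The only real obstacle is the bookkeeping in stages~(iii)--(iv): one must avoid an accidental extra factor of $d$ (or $n$) by observing that the per-block hashes of unmodified blocks are shared across the $d$ corrections, and that evaluating $h(c_i)$ counts as a single hash call no matter how long $c_i$ is. Beyond that, the argument is a direct appeal to Theorems~\ref{mtss:verifyCorrect} and~\ref{mtss:uniqueT} and Proposition~\ref{mtss:propcorr} together with an elementary geometric-series count.
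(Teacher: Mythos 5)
Your proposal is correct and follows essentially the same route as the paper's own proof: correctness via Theorem~\ref{mtss:theocollision}/Proposition~\ref{mtss:propcorr} and Theorem~\ref{mtss:uniqueT}, and the count of $n+t$ hashes for location with $t=O(d^2\log n)$ from the CFF construction, plus $O(2^s)$ candidate hashes per modified block with cached block hashes, giving $O(n+d^2\log n+d2^s)$. Your accounting is in fact slightly more explicit than the paper's (notably in charging the recomputation of $h(c_i)$ per candidate string), but the argument is the same.
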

\begin{proof}
	By choosing a suitable hash function, Theorem~\ref{mtss:theocollision} guarantees that {\sc MTSS-Verify\&Correct} returns $(1, I, m)$. 
	The algorithm starts with the location of the modified blocks. This step uses a $d$-CFF for which we can use a construction with $t = O(d^2 \log n)$ rows~\cite{mtss:PR}, and therefore a total of $n+t$ hash calculations are required to locate these modifications. 
	After locating up to $d$ modified blocks, we need to perform the following computations for each one of them. We compute every possible block of size up to $s$ and their corresponding hash values (total of $2^{s+1}-1$), 
	and according to the row of the CFF matrix, we compute a few extra hashes (in total not more than $n$, if storing $h_i$ instead of recomputing among different runs of of line 3). This gives a total of $O(d2^s + n)$ hash computations for the correction step. 
\end{proof}

\subsection{Comparing Scheme 2 with \new{Scheme E}}\label{section:comparison}

Let an ($l,n,D$)$_q$-ECC $\mathcal{C}$ be an error correcting code with minimum distance $D$, codewords of length $l$ that encode messages of size $n$ over an alphabet of size $q$. The alphabet must be so that it can distinguish each possible block considered in Scheme 2, so $q \geq 2^s$. For simplification, assume $c_m=(m,b_m)$ where $b_m$ is a tuple of $l-n$ letters (``check bits").
This code can correct $d = \lfloor \frac{D-1}{2} \rfloor$ errors (modifications in the message). 
Next, we describe signature and verification \new{using Scheme E}.
Using the same inputs as {\sc MTSS-Sign}, algorithm {\sc ECC-Sign}($m, SK$) do the following steps:
	1) Compute $\sigma' = \Sigma$.{\sc Sign}($m,SK$);
	2) Compute $c_m=(m,b_m)$ according to ECC $\mathcal{C}$ and return $\sigma = (b_m, \sigma')$.
Then, algorithm {\sc ECC-Verify\&Correct}($m, \sigma=(b, \sigma'), PK$) do the following steps:
	1) Decode $c=(m,b)$ to $m'$ using $\mathcal{C}$;
	2) If $\Sigma$.{\sc Verify}($m', \sigma, PK$) = 1 then return $(1, \text{diff}(m, m'), m')$, else return $0$.

Since $\mathcal{C}$ can correct up to $d$ errors, if the signature $\sigma$ is valid ($\sigma'$ is authentic for $m'$ and $|\text{diff}(m, m')| \leq d$), then {\sc ECC-Verify\&Correct} will behave in the same way as {\sc MTSS-Verify\&Correct} and will return $(1, \text{diff}(m, m'), m')$. However, when {\sc ECC-Verify\&Correct} returns $0$, it does not distinguish between the two failing cases obtained by {\sc MTSS-Verify\&Correct}, namely:
\textbf{Case 1)} output $(0, -)$, which means the CDSS signature $\sigma'$ is not authentic;
\textbf{Case 2)} output $(0, I)$, which means $\sigma'$ is authentic and message $m$ differs from the signed $m'$ in more than $d$ blocks, and also if $|I| < n$, then $|\{1, \ldots, n\} \setminus I| > 0$ and we are sure of at least one unmodified block.
Therefore, while the \new{Scheme E} is an MTSS scheme according to Definition~\ref{mtss:def:MTSS}, it provides less information than Scheme 2.

A comparison of Scheme 2 with \new{Scheme E} shows they have similar compression ratios.
Indeed, we first note that given an $(l, n, D)_q$-ECC, it is possible to construct a $d$-CFF($l\cdot q, q^n$), with $d = \lfloor(l-1)/(l-D)\rfloor$ \cite{mtss:Niederreiter2007}. Then, we consider some families of error correcting codes, and for each family we restrict Scheme 2 to only use CFFs constructed using codes from this family. 
In the next table, we summarize the compression ratios obtained for 3 families of codes, but we omit the details of our calculations due to lack of space. 

\begin{center}
	\begin{tabular}{l|c|c|c}
		\hline
		code: & Hamming, $n=2^{2^r-1}$& Reed-Solomon, $n=q^{\frac{(q-1)}{d}+1}$& PR-code\cite{mtss:PR}\\ \hline
		Scheme 2 &    $\approx 2^{2^r-2r-2}$         &         $q^{\frac{(q-1)}{d}-1}$   &    $\Theta(\frac{n}{d^2 \log n})$                        \\
		\new{Scheme E} &    $\approx 2^{2^r-2r-1}$        &          $\frac{q^{\frac{(q-1)}{d}+1}}{2d+1}$   &               $O(\frac{(\log d) n}{d^2 \log n})$\\ \hline
	\end{tabular} 
	\end{center}

In conclusion, both schemes serve the same purpose with similar compression ratios, but Scheme 2 has several advantages. First, Scheme 2 gives more information in the failing case where ECC returns $0$, as discussed above. Second, Scheme 2 provides a non-correcting version of the verification algorithm ({\sc MTSS-Verify}) which in the case of unmodified messages is basically as time efficient as {\sc$\Sigma$.Verify} (see step 1 of {\sc MTSS-Verify}); in this case, \new{Scheme E} still needs to run a decoding algorithm, with complexity influenced by $q \geq 2^s$, where $s$ is the largest size of a block. Finally, Scheme 2 is part of a family of similar schemes presented here (Schemes 1-3), which can be more easily compared.


\section{Security}\label{mtss:SecSecurity}

In this section, we present the security proof of $d$-MTSS for Schemes 1 and 2.
In order to do this, we need to check that the security of the hash function $h$ and the unforgeability of the underlying CDSS can together ensure unforgeability of $d$-MTSS. Note that although Scheme 1 appeared in~\cite{mtss:thaisIPL}, no security proof has been given in that paper. 
For the next proof we assume a \emph{collision-resistant} hash function, i.e. a hash function in which a collision cannot be efficiently found \cite{mtss:stinsonBook}.

When we consider $d$-MTSS using $d$-CFFs, a valid $(m, \sigma)$ as in Definition~\ref{mtss:validsig} implies that there exists $m'$ such that $\sigma$ is an output of {\sc MTSS-Sign}($m', SK$) and $|\text{diff}(m,m')| \leq d$. 
In the next theorem we suppose there is a valid $(m, \sigma)$ as a forgery to our scheme. We consider two types of forgery: a \emph{strong forgery} consists of $(m, \sigma)$ such that {\sc MTSS-Verify}($m,\sigma, PK$) = $(1,I), |I| \leq d$; a \emph{weak forgery} consists of $(m, \sigma)$ such that {\sc MTSS-Verify}($m,\sigma, PK$) = $(1,\emptyset)$.

\begin{theorem}\label{mtss:thoushallnotforge}
	Let X be a $d$-MTSS as described in Scheme 1 based on an existentially unforgeable CDSS $\Sigma$ and on a collision resistant hash function $h$. 
	Then, X is existentially unforgeable.
\end{theorem}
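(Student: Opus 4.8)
The plan is to give a standard reduction argument: assuming a PPT adversary $\mathcal{A}$ produces an existential forgery against $X$ (Scheme 1) with non-negligible probability, we build a PPT adversary $\mathcal{B}$ that either breaks the unforgeability of the underlying CDSS $\Sigma$ or finds a collision in $h$, contradicting the hypotheses. The reduction $\mathcal{B}$ runs $\mathcal{A}$ internally, using its own signing oracle for $\Sigma$ to answer $\mathcal{A}$'s MTSS-signing queries. When $\mathcal{A}$ asks for a signature on $m_i = (m_i[1],\dots,m_i[n])$, the reduction computes the tuple $T_i = (T_i[1],\dots,T_i[t],h^*_i)$ exactly as in {\sc MTSS-Sign} (this requires only public data: $h$ and $\mathcal{M}$), submits $T_i$ to its own $\Sigma$-oracle to obtain $\sigma'_i$, and returns $\sigma_i = (\sigma'_i, T_i)$ to $\mathcal{A}$. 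It records all pairs $(m_i, T_i)$.

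Eventually $\mathcal{A}$ outputs a forgery $(m,\sigma)$ with $\sigma = (\sigma', T)$, $T = (T[1],\dots,T[t],h^*)$. By Definition~\ref{mtss:validsig} and the $d$-CFF setting, validity means there is some $m'$ with $|\text{diff}(m,m')| \le d$ such that $\sigma$ is a valid output of {\sc MTSS-Sign}($m', SK$); note this forces $\Sigma.${\sc Verify}$(T,\sigma',PK)=1$. The key case split is on whether the signed tuple $T$ ever appeared among the queried tuples $T_i$. \textbf{Case A:} $T \neq T_i$ for all $i$. Then $(T,\sigma')$ is a fresh valid message-signature pair for $\Sigma$ that $\mathcal{B}$ never queried, so $\mathcal{B}$ outputs it as a forgery against $\Sigma$ — contradicting existential unforgeability of $\Sigma$. \textbf{Case B:} $T = T_j$ for some $j$. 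Here I would separate the strong-forgery and weak-forgery subcases. Since $m \neq m_j$ (the forgery message is new, and $m' $ differs from $m$ in at most $d$ blocks so $m'$ is also ``close''; one should argue $m$ differs from $m_j$ somewhere), there is a block index $p$ with $m[p] \neq m_j[p]$. For a weak forgery, {\sc MTSS-Verify} outputs $(1,\emptyset)$, which requires $h^* = h(m)$; but $h^* = h^*_j = h(m_j)$ and $m \neq m_j$, yielding a collision in $h$. For a strong forgery with $|I| \le d$, the $d$-CFF property guarantees a row $i$ with $\mathcal{M}_{i,p}=1$ and $\mathcal{M}_{i,q}=0$ for all $q \in I \setminus \{p\}$ — actually we need a row avoiding all modified blocks while hitting $p$ only if $p \notin I$; the cleaner argument is: the verification accepted, so for every block outside $I$ (in particular we need $p$'s status) the relevant $T[i]$ matched $T'[i]$. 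One shows that accepting $(m,\sigma)$ with $m$ differing from $m_j$ forces, via some matching row, a concatenation $c_i$ computed from $m$ equal in hash to the concatenation $c_i^{(j)}$ computed from $m_j$, while the underlying concatenations differ (because $m$ and $m_j$ differ in a block that row $i$ includes) — again a collision in $h$, either at the block-hash level $h(m[p]) = h(m_j[p])$ or at the concatenation level $h(c_i) = h(c_i^{(j)})$ with $c_i \neq c_i^{(j)}$. In all subcases $\mathcal{B}$ extracts a collision, contradicting collision resistance of $h$.

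The main obstacle I anticipate is the bookkeeping in Case B: carefully arguing that a new accepted message $m$ sharing the signed tuple $T = T_j$ with a previously queried $m_j$ must produce a genuine hash collision somewhere. One must handle that $m$ differs from $m_j$ in possibly many blocks (not just $\le d$), use the $d$-CFF/permutation-submatrix structure (Theorem~\ref{mtss:uniqueT} and the correctness analysis in Theorem~\ref{mtss:verifyCorrect}) to locate a row $i$ where the concatenations $c_i$ built from $m$ and from $m_j$ both hash to $T[i]=T_j[i]$ yet are distinct as strings, and be careful about block boundaries so that ``the concatenations differ'' genuinely follows from ``some included block differs'' (this is where fixed or self-delimiting block encodings matter). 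Once that structural claim is pinned down, the probability accounting is routine: $\mathcal{A}$'s success probability is at most the sum of $\mathcal{B}$'s success probability against $\Sigma$ and against $h$ (up to choosing which collision to output), both negligible, so $X$ is existentially unforgeable.
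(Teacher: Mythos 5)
Your proposal is correct and takes essentially the same route as the paper's proof: a reduction in which a forgery with a fresh tuple $T$ is output directly as an existential forgery of $\Sigma$, while a reused tuple $T=T_j$ yields a collision of $h$ either on the whole-message hash (weak forgery, via $h^*=h(m)$ versus $h(m_j)$) or, for a strong forgery, at the block-hash or concatenation level through a matching CFF row. The bookkeeping point you flagged is settled exactly as you suspected and as the paper does: a forgery must satisfy $|\text{diff}(m,m_j)|>d$ while $|I|\le d$, so some differing block $k$ lies outside $I$; verification then provides a row containing $k$ whose concatenation (of fixed-length block hashes, so no boundary ambiguity) hashes to $T[k$'s row$]=T_j[$same row$]$, giving a collision either on $h(m[k])=h(m_j[k])$ or on the two distinct concatenations.
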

\begin{proof}(By contradiction)
	Suppose X is not existentially unforgeable. Then, there is an adversary $\mathcal{A}$ that, after $q$ adaptive queries to a signing oracle $\mathcal{O}$, obtains pairs $(m_1,\sigma_1), \ldots, \\(m_q,\sigma_q)$, with $\sigma_i = (\sigma'_i, T_i), 1 \leq i \leq q$, and with non\new{-}negligible probability,
	outputs a valid pair $(m, \sigma)$, with $\sigma = (\sigma', T)$, $T = (T[1], T[2], \ldots, T[t], h^*)$,  and $|\text{diff}(m, m_i)| > d$, for all $1 \leq i \leq q$. 
	
	We show that if such $\mathcal{A}$ exists, then we can build a probabilistic polynomial time algorithm $\mathcal{A}'$ which \new{has the following input and output}:
	
	\noindent
	\textbf{Input:} an existentially unforgeable CDSS $\Sigma$ and a collision-resistant hash function $h$, both with security parameter $\ell$.
	
	\noindent
	\textbf{Output:} either a existential forgery $(T, \sigma')$ of $\Sigma$ or a collision pair for $h$.
	
	Next we describe the steps of such $\mathcal{A}'$.
	
	\begin{enumerate}
		\item Simulate the probabilistic polynomial time adversary $\mathcal{A}$ forging an MTSS based on $\Sigma$ and $h$ using queries mentioned above. With non-negligible probability, this will produce a forgery $(m, \sigma = (\sigma', T))$ in X, as described above. Otherwise, return ``FAIL".
		\item If $T \neq T_i$, for all $1 \leq i \leq q$, $\mathcal{A}'$ presents ($T, \sigma'$) as a forgery in $\Sigma$, for $\sigma'$ the corresponding signature of $T$.
		\item If $T=T_i$, for some $1 \leq i \leq q$, first calculate $I$ by computing {\sc MTSS-Verify}($m,\sigma, PK$). Then, we have:
		\begin{itemize}
			\item In the case of weak forgery, we must have $I = \emptyset$, and so the final element of $T$ is $h(m)$ and the final element of $T_i$ is $h(m_i)$, and since $m \neq m_i$, $\mathcal{A}'$ presents ($m, m_i$) as a collision pair for $h$.
			\item Otherwise, we must have a strong forgery with $1 \leq |I| \leq d$. So, there exists $m'$ such that {\sc MTSS-Verify}($m',\sigma, PK$) = $(1, \emptyset)$ and $|\text{diff}(m,m')| \leq d$. 
			Since $|\text{diff}(m, m_i)| > d$, there must be a block $m[k], k \in \{1,\ldots,n\} \setminus I$, that is considered valid in $m$ but $m[k] \neq m_i[k]$. Let $p$ be any row of the CFF matrix ${\cal M}$ with ${\cal M}_{p,k} = 1$. Because $T=T_i$, this implies $T[p] = h(c) = h(c') = T_i[p]$, for $c = h(m[j_1])||h(m[j_2]) || \ldots ||h(m[j_s])$ and $c' = h(m_i[j_1])||h(m_i[j_2]) || \ldots ||h(m_i[j_s])$, where $k \in \{j_1, j_2, \ldots, j_s\}$. If $h(m[k]) = h(m_i[k])$, then $\mathcal{A}'$ presents ($m[k], m_i[k]$) as a collision pair for $h$. Otherwise, $\mathcal{A}'$ presents ($c, c'$) as a collision pair for $h$.
		\end{itemize}
	\end{enumerate}
	The probability $p(\ell)$ that $\mathcal{A}'$ succeeds is the same as the probability that $\mathcal{A}$ succeeds, where $\ell$ is the security parameter. Whenever $\mathcal{A}'$ succeeds, we have either an existential forgery of $\Sigma$ or a collision for $h$, corresponding to steps 2 and 3, respectively.
	For each $\ell$, one of these steps is at least as likely to occur as the other, so it has probability at least $1/2$. So, for any security parameter $\ell$, use one of two algorithms (an existential forger for $\Sigma$, or a collision finder for $h$) that runs in probabilistic polynomial time and succeeds with probability at least $p(\ell)/2$. In other words, we can technically design two adversaries $\mathcal{A}'_1$ and $\mathcal{A}'_2$ based on $\mathcal{A}'$. $\mathcal{A}'_1$ forges $\Sigma$ by proceeding as $\mathcal{A}'$ but returning ``FAIL" if it falls in the case of step 3. Similarly, $\mathcal{A}'_2$ finds a collision for $h$ or returns ``FAIL" if it falls in the case of step 2. Then, either $\mathcal{A}'_1$ or $\mathcal{A}'_2$ will succeed with probability at least $p(\ell)/2$ for infinitely many $\ell$. This contradicts either the unforgeability of $\Sigma$ or the collision resistance of $h$. \qed
\end{proof}

\section{Using MTSS for Redactable Signatures}\label{mtss:SecRedactable}

Now we turn our attention to using MTSS in general and using similar algorithms to our proposed $d$-MTSS for redactable signature schemes.
Redactable and sanitizable signature schemes allow for parts of a message to be removed  or modified while still having a valid signature without the intervention of the signer.
Sanitizable signatures usually requires the existence of a semi-trusted party called the {\em sanitizer} who is entrusted to do the modifications and recomputation of the signature, in some schemes done in a transparent way 
(see~\cite{mtss:PohlsThesis}). 
Our proposed scheme does not deal with sanitizable, but rather with redactable signatures. 

Redactable signatures have been proposed in several variants also under the names of content extraction signatures~\cite{mtss:Steinfeld} and homomorphic signatures~\cite{mtss:Johnson}. 
Redactable signature schemes (RSS) ``allow removing parts of a signed message by any party without invalidating the respective signature''~\cite{mtss:Derler} and without interaction with the signer. In~\cite{mtss:Steinfeld}, the authors mention content extraction for privacy protection or bandwidth savings.
Suppose Bob is the owner of a document signed by Alice and does not want to send the whole document to a third verifying party Cathy but only some extracted parts of the document; however, Cathy still needs to verify that Alice is the signer of the original document. Alice is agreeable with future redactions when she signed the document. An example given in~\cite{mtss:Steinfeld} is that Bob has his university transcripts signed by the issuing university (Alice) and wants to submit the transcripts to a prospect employer (Cathy) without revealing some of his private information such as his date of birth. Cathy must be able to verify that the signature came from the university in spite of the redaction. In addition to the privacy application, content extraction can be used in a similar way when only part of a large document needs to be passed by Bob to Cathy for the purpose of reducing the communication bandwidth~\cite{mtss:Steinfeld}. 

The notion of redactable signatures we consider next is in line with our general definition of MTSS, but differs in that we add another algorithm called {\sc MTSS-Redact} and we require total privacy of the redacted parts. We give next a $d$-MTSS version that is redactable based on ideas similar to Scheme 1 but modifying it to guarantee total privacy of redacted blocks.
As mentioned in~\cite{mtss:PohlsThesis}, the scheme proposed in~\cite{mtss:thaisIPL} which was presented as Scheme 1 does not meet standard privacy requirements and in particular leaks the original message's hash value. The scheme we propose below addresses this issue by individually signing each of the hashes of the groups of blocks, and at the time of redaction of blocks, also redacting from the signature tuple any hashes involving concatenations of the modified blocks. To avoid more complex forms of forgery, that could take advantage of combining individual signatures of concatenations coming from different signed messages, we add the same random string to the various hashes of concatenations to link the individual parts that are signed at the same time; in addition, to avoid reordering of blocks within the same message via reordering the groups, we add a group counter to these hashes before signing. 

In the description below\new{,} a redaction is represented by the symbol $\blacksquare$.

\vspace{10pt}
\noindent
\textbf{Scheme 3: A Redactable $d$-MTSS with Privacy Protection}
\vspace{5pt}

The scheme requires an underlying CDSS $\Sigma$, a public hash function $h$, a $d$-CFF($t,n$) matrix $\mathcal{M}$ and a random number generator {\sc rand}. The algorithms are given next:
\begin{itemize}
	\item {\sc MTSS-KeyGeneration}$(\ell)$: generates a key pair ($SK, PK$) using algorithm \\$\Sigma$.{\sc KeyGeneration}$(\ell)$.
	\item {\sc MTSS-sign}($m, SK$): Takes as input  $SK$ and a message $m = (m[1], \ldots, m[n])$, and proceeds as follows.
	\begin{enumerate}
		\item Calculate $h_j = h(m[j]), 1 \leq j \leq n$. Compute a random string $r=${\sc rand}(). 
		\item For each $1 \leq i \leq t$, compute $c_i$ as the concatenation of all $h_j$ such that $\mathcal{M}_{i,j} = 1, 1 \leq j \leq n$, and set $T[i] = h(c_i)||r||id(i,t+1)$, where $id(i,t+1)$ encodes the numbers $i$ and $t+1$.
		\item Compute $h^* = h(m)$, set $T[t+1]=h^*||r||id(t+1,t+1)$ and set $T = (T[1],  \ldots, T[t+1])$.
		\item Calculate $\sigma'[i]= \Sigma$.{\sc Sign}($T[i], SK$), for each $1 \leq i \leq t+1$ and set $\sigma'=(\sigma'[1], \sigma'[2], \ldots, \sigma'[t+1])$. Output signature $\sigma=(\sigma',r)$.
	\end{enumerate}
	\item {\sc MTSS-verify}($m, \sigma, PK$): takes as input a message $m = (m[1], \ldots, m[n])$, a signature $\sigma = (\sigma', r)$, and a public key $PK$, \new{and} proceed as follows. 
	\begin{enumerate}
		\item Check if $\Sigma$.{\sc Verify}$(h(m)||r||id(t+1,t+1),\sigma'[t+1],PK)=1$. Stop and output $(1, \emptyset)$ if that is the case; continue otherwise.
		\item Use $\mathcal{M}$, $m$, $r$ and do the same process as in steps 1-3 of {\sc MTSS-sign} to produce tuple  $T'=(T'[1], \ldots, T'[t])$.
		\item For each $1 \leq i \leq t$ such that $\Sigma$.{\sc Verify}$(T'[i],\sigma'[i],PK)=1$, compute the set of indices of intact blocks $V_i = \{j: \mathcal{M}_{i,j} = 1\}$ and do $V=V\cup V_i$. Compute $I = \{1, \ldots, n\}\setminus V$. Output $(1,I)$ if $|I| \leq d$; output $0$ otherwise.
	\end{enumerate}
	\item {\sc MTSS-redact}($m, \sigma, R$):  takes as input a message $m = (m[1], \ldots, m[n])$, a signature $\sigma = (\sigma', r)$, and a set $R\subseteq \{1,\ldots,n\}$ of blocks to be redacted, with $|R| \leq d$, and proceeds as follows.
	\begin{enumerate}
		\item If $R=\emptyset$, then stop and output $(m,\sigma)$.
		\item Create copies of the message and signature: $\overline{m}=m$, $\overline{\sigma}={\sigma}$, so that $\overline{\sigma}=(\overline{\sigma}',r)$.
		\item Set $\overline{\sigma}'[t+1]=\blacksquare$.
		\item For each $j\in R$: set $\overline{m}[j]=\blacksquare$ and for each index $i$ such that $\mathcal{M}_{i,j}=1$ set $\overline{\sigma}'[i]=\blacksquare$
		\item Output $(\overline{m},\overline{\sigma}=(\overline{\sigma}',r))$.
	\end{enumerate}
\end{itemize}

The correctness of the first three algorithms follows similar reasoning using the CFF properties as argued in Theorem~\ref{mtss:verifyCorrect}, taking into account the different approach used here where $t$ CDSS signatures and $t$ CDSS verifications are required.
The redaction offers total privacy in the sense of information theory, as no information related to the redacted blocks is kept in the signature, which are erased in \new{steps} \new{3 and 4} of {\sc MTSS-redact}.
The redacted blocks and redacted parts of the signature will only affect the verification of the parts of the signatures involving redacted blocks; all other block \new{indices} will be indicated as unmodified in the output of line 3 of {\sc MTSS-verify}, as long as no more than $d$ blocks have been redacted.

In the next theorem, we establish the unforgeability of this scheme.

\begin{theorem}
	Let X be a redactable $d$-MTSS  given in Scheme 3 based on an existentially unforgeable CDSS $\Sigma$, on a collision-resistant hash function $h$, \new{and on a random Oracle.}
	Then, X is existentially unforgeable.
\end{theorem}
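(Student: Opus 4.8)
Here is a plan for proving the unforgeability of Scheme~3.

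\medskip
\noindent
The plan is to mirror the skeleton of the proof of Theorem~\ref{mtss:thoushallnotforge}. I would assume for contradiction that there is a PPT adversary $\mathcal{A}$ that, after $q$ adaptive queries to the signing oracle $\mathcal{O}$ (redaction needs no secret key, so $\mathcal{A}$ performs it on its own), outputs with non-negligible probability a \emph{valid} pair $(m,\sigma)$ with $\sigma=(\sigma',r)$, $\sigma'=(\sigma'[1],\ldots,\sigma'[t+1])$ (some entries possibly $\blacksquare$), that is a genuine forgery, i.e.\ $|\text{diff}(m,m_i)|>d$ for all $1\le i\le q$. From $\mathcal{A}$ I would build a PPT algorithm $\mathcal{A}'$ that, using the $q$ signatures produced by the simulated $\mathcal{A}$, outputs either an existential forgery $(u,\tau)$ for $\Sigma$ or a collision pair for $h$; modelling $h$ as a random oracle lets me assume throughout that the (polynomially many) hash inputs arising in the game are pairwise non-colliding except with negligible probability, and that the nonces $r_1,\ldots,r_q$ are pairwise distinct. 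As in Theorem~\ref{mtss:thoushallnotforge}, whenever $\mathcal{A}'$ succeeds one of the two bad events occurs, and the one occurring with probability at least $p(\ell)/2$ contradicts the unforgeability of $\Sigma$ or the collision resistance of $h$; I would copy that probability bookkeeping and the split into $\mathcal{A}'_1,\mathcal{A}'_2$ verbatim.

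\medskip
\noindent
Next I would extract the structural consequences of $(m,\sigma)$ being valid. Running {\sc MTSS-verify}$(m,\sigma,PK)$ ends either in step~1 with a component $\sigma'[t+1]$ such that $\Sigma$.{\sc Verify}$(h(m)||r||id(t+1,t+1),\sigma'[t+1],PK)=1$ (the weak case, $I=\emptyset$), or in step~3 with at least one matching row $i$ of $\mathcal{M}$ for which $\Sigma$.{\sc Verify}$(T'[i],\sigma'[i],PK)=1$, where $T'[i]=h(c_i)||r||id(i,t+1)$ is recomputed from $m$ and $r$. I fix such a component signature and the string $u$ it verifies and split on whether $u$ was ever submitted to the internal $\Sigma$-signing oracle while answering some query. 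If not, $(u,\sigma'[i])$ (resp.\ $(u,\sigma'[t+1])$) is an existential forgery for $\Sigma$ and $\mathcal{A}'$ returns it. If $u$ was signed during query $j$ on message $m_j$ with nonce $r_j$, then, since $u$ embeds $r$ and $id(i,t+1)$, matching forces $r=r_j$, the row index to equal the group index used in $m_j$, and $t$ to equal the block count of $m_j$; hence $h(c_i)=h(c_i^{(j)})$ for the concatenation $c_i^{(j)}$ computed from $m_j$ on exactly that row. If $c_i\neq c_i^{(j)}$, $\mathcal{A}'$ returns this as an $h$-collision; otherwise, hash outputs having fixed length, $m$ and $m_j$ agree on every block of group $i$. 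In the weak case the same reasoning gives $h(m)=h(m_j)$, and since the forgery is new $m\neq m_j$, so $\mathcal{A}'$ returns $(m,m_j)$ as an $h$-collision.

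\medskip
\noindent
To finish I would combine this with the $d$-CFF property exactly as in the proof of Theorem~\ref{mtss:verifyCorrect}. Using a component signature drawn from a session whose nonce is not $r$ would itself verify a string ending in $r$ that was not signed in any session (no session with a different nonce signed it), hence would be an existential forgery for $\Sigma$; so I may assume every matching row of the forgery originates from the single session $j^\star$ whose nonce equals $r$. Then every block in $\{1,\ldots,n\}\setminus I$ is covered by a matching row and therefore equals the corresponding block of $m_{j^\star}$. But $|I|\le d$ while $|\text{diff}(m,m_{j^\star})|>d$, so $\text{diff}(m,m_{j^\star})\not\subseteq I$; a block $k\in\text{diff}(m,m_{j^\star})\setminus I$ is then covered yet disagrees with $m_{j^\star}[k]$, a contradiction. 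A redacted pair $\mathcal{A}$ may have formed on its own is, by the correctness of {\sc MTSS-redact}, already valid for its own redacted message and so is not a forgery, and it carries only a subset of the component signatures (with the same $r$) of its parent session, so redaction gives $\mathcal{A}$ no additional material.

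\medskip
\noindent
The main obstacle, and the reason Scheme~3 needs the nonce $r$, the counter $id(i,t+1)$, and a random-oracle assumption, is the ``mix-and-match'' threat absent from Scheme~1: since the $t+1$ hashes are signed individually rather than as one atomic tuple, one must rule out a forgery assembled from component signatures harvested across several signing sessions (and across redaction outputs). The crux is the lemma, sketched above, that every non-redacted component of any $\sigma'$ an adversary can output must --- up to replacing it by an equivalently-verifying honest component --- come from one single signing session (pinned by the shared $r$, whose cross-session collision is negligible) and sit in its original position and message length (pinned by $id(i,t+1)$). Making this lemma airtight, together with the boundary cases where extra modifications push $|I|$ past $d$ so that {\sc MTSS-verify} outputs $0$ and there is no forgery at all, is the part I expect to require the most care.
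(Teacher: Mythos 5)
Your plan takes essentially the same route as the paper's own (sketched) proof: reduce any forgery to either an existential forgery of $\Sigma$ or a collision of $h$ exactly as in Theorem~\ref{mtss:thoushallnotforge}, with the shared random string $r$ pinning every verifying component signature to a single signing session, the encoding $id(i,t+1)$ pinning its position and tuple length, and the $d$-CFF property yielding the final contradiction with $|\text{diff}(m,m_{j^\star})|>d$. Your write-up is in fact more detailed than the paper's sketch (including the treatment of redaction outputs), and I see no gap relative to the paper's argument.
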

{\em Sketch of the proof.}
Similar reasoning as in Theorem~\ref{mtss:thoushallnotforge} can be used to argue that the unforgeability of $\Sigma$ and the collision resistance of the hash function are sufficient to protect against forgery at the level of individual parts of the tuple signature. However, we need to rule out more complex forgery attempts that could try to combine individual signatures in the signature tuple in different ways as well as different blocks from different messages. The use of a common random string in the creation of each $T[i]$ in lines 2 and 3 of algorithm {\sc MLSS-Sign} prevents an adversary from trying to combine message blocks and signature parts of $(m_1,\sigma_1), \ldots, (m_q,\sigma_q)$ coming from different calls to the \new{random} Oracle to create a new valid signature since, by definition, two different calls to the oracle will, \new{with high probability,} not produce the same random string. Furthermore, the concatenation of the encoding of the test index and the number of tuple elements, $id(i,t+1)$, within each tuple position $T[i]$ in line 2 and 3, \new{makes it highly unlikely that blocks are} added, removed or reordered, \new{since this would amount to finding a collision in the hash function. \qed} 


\section{Implementation Issues}\label{mtss:implementation}

When implementing Schemes 1-3, there are a few details that need to be considered regarding efficiency, security, and even flexibility on the inputs or outputs of the algorithms. For example, we consider a message divided into blocks that we represented as a sequence, but blocks may be represented using different data structures depending on the application and the type of data. Pohls~\cite{mtss:PohlsThesis} discusses sequence, set and tree data structures for blocks, which could be employed.

When signing a message using {\sc MTSS-Sign} as described in Schemes 1-3, we could consider not  hashing every block before concatenation, \new{e}specially if the sizes of the blocks are small enough that a hash function will end up increasing the size of the data being concatenated. This approach needs to be carefully considered, since a simple concatenation does not insert a delimitation on where the individual blocks start or end, and therefore blocks with a shared suffix or prefix may lead to wrong identifications of valid/invalid. To be safe, we can hash each block before concatenating them as presented in our schemes, or ensure the concatenations use delimiters to separate blocks, \new{or require blocks of fixed size.}

The correction algorithm {\sc MTSS-Verify\&Correct} computes the set of indices of modified blocks $I$ and tries to correct these blocks to their original value. If there is at least one position where two different blocks match the hash of the original one (a second preimage on the hash function), the algorithm aborts the correction process and outputs an empty string $\lambda$ to represent correction error. As seen in Proposition~\ref{mtss:probcoll} and the discussion that follows, it is very unlikely for such event to occur, and we could always choose another hash function with no collisions for certain block sizes. However, if some specific application is not flexible in the choice of hash functions and such \new{an} event happens, an interesting approach can be to correct as many modified blocks as possible, and return some information regarding the ones with collisions (such as a fail message or even a list of possible values for those specific blocks). This approach allows for partial correction of modified data, that may be interesting for some applications. Moreover, if we already know that the chosen hash function has no collisions for blocks of size up to $s$, we do not need to run step 4 for every possible block of size $\leq s$, and we could stop this loop as soon as we find the first match, improving the efficiency of the method.

Regarding the multiple hash computations that happen in the correction algorithm, one could consider some improvements. Note that we already compute the hashes of all unmodified blocks when we do the call of {\sc MTSS-Verify} to obtain the set $I$, so these hashes of unmodified blocks could be reused in step 2. Moreover, we repeat for every modified block in $I$ the same process of computing all blocks of size $\leq s$ and their corresponding hash values. For the cases where we have a big set of modified blocks (big $d$), one may consider to pre-compute all these values, in case the application can handle the storage that this will require.

\section{Parameter Relations}\label{mtss:param}
For MTSS with modification location only (Scheme 1), a message $m$ is split into $n$ blocks of no specific size, and a location capability parameter $d$ needs to be decided. These parameters are used to construct a $d$-CFF($t,n$) with number of rows $t = \Theta(d^2 \log n)$ if using~\cite{mtss:PR}, $t = d \sqrt{n}$ if using~\cite{mtss:PPS}\new{, and $t = q^{2}$ when using~\cite{mtss:erdospoly}, for any prime power $q$ and positive integer $k$ that satisfy $q \geq dk+1$}. The signature size is impacted by $d$ and $n$, since there are $t+1$ hashes as part of the signature.  
Therefore, while smaller sizes of blocks and larger $d$ give a more precise location of modifications, they also increase the size of the signature.

Now consider the exact same message $m$, but for the case of an MTSS with correction capabilities (Scheme 2). This scheme requires that the blocks have a small size of up to $s$ bits, which implies that the very same message $m$ now has many more blocks $n' \gg n$. A larger number of blocks directly increases the size of the signature. But now, the $d$ that was enough to locate the modifications before may not be enough anymore, since modifications that were in one big block before now may be spread over several small blocks. When locating the modifications, the algorithm aborts the process if the number of modified blocks is larger than the expected location capability $d$, which may cause the scheme to fail to correct more often if this value is not increased as $n'$ increases.

The size of the signature $\sigma$ in Schemes 1 and 2 is $|\sigma| = |h(.)|\times(\new{t} + 1)+|\sigma'|$, and in Scheme 3 is $|\sigma| = |\sigma'|\times(\new{t} + 1)+|r|$, for \new{$\sigma'$ a \new{classical} digital signature, $r$ a random bit string, $h(.)$ a traditional hash function, and $t$ the number of rows of a $d$-CFF($t,n$)}. The input consists of a message of size $|m| = N$ in bits, divided into $n$ blocks, each of size at most $s$, so $N \approx n \times s$. 
In \new{Scheme 2,} given $N$, we need to wisely choose $s$ so it is small enough to allow corrections of blocks, while guaranteeing $n$ is small enough to have a reasonable $|\sigma|$. We \new{cannot} expect our signature to be as small as the ones from \new{classical} digital signature schemes, since we require extra information in order to be able to locate and correct portions of the data. In summary what we want is $n \times s \gg \new{|\sigma|}$.

\new{The} next example\new{s} show that even for small $s=8$, we still have a reasonably small signature. 

\noindent
\textbf{Example 1: }  $N = 1$ \new{GB} = $2^{30}$ bytes = $2^{30}2^3 = 2^{33}$ bits, $h =$ SHA-$256$, $s = \log |h| = \log 2^8 =8$ bits, $d = 4$, and we use RSA with \new{a} $2048$-bit modulus. Then we have $n = N/s = 2^{33}/2^3 = 2^{30}$ blocks, with $8$ bits each. \new{Consider the $d$-CFF($q^{2}, q^{k+1}$) from~\cite{mtss:erdospoly}, with $k=6, q = 25, t = 25^2, n = 25^7$.}
Now, since $|\sigma| = |h(.)|\times(\new{t} + 1)+|\sigma'|$ in Schemes 1 and 2, we have $|\sigma| = 2^8\times(\new{25^2}+ 1)+2048 = \new{162304}$ bits, which is $\new{20288}$ bytes, or $\new{\sim 20}$ \new{KB}.
\vspace{5pt}

\noindent
\textbf{Example 2:} For the same message and parameters as in Example 1, now we use a random bit string $r$ of size $128$ bits and create a signature as in Scheme 3. Since $|\sigma| = |\sigma'|\times(\new{t} + 1)+|r|$, we have $|\sigma| = 2048\times(\new{25}^2+ 1)+128 = \new{1282176}$ bits, which is $\new{160272}$ bytes, or $\sim \new{160}$ \new{KB}. 

For Scheme 3, small blocks like in Example 2 are not required, so we can get a much smaller signature. The choice of $s$ and $n$ in this case depends on the application, and on whether we wish to correct non-redactable blocks. 

\section{Conclusion} \label{mtss:conclusion}
We introduce modification tolerant signature schemes (MTSS) as a general framework to allow localization and correction of modifications in digitally signed data. We propose a scheme based on $d$-CFFs that allows correction of modifications, and a variation without correction that gives redactable signatures. The presented schemes are provably secure and present a digital signature of size close to the \new{best known lower bound} for $d$-CFFs. 

Interesting future research includes an implementation of the schemes proposed here, with practical analysis of parameter selection for specific applications. In addition, new solutions for MTSS beyond $d$-CFFs can also be of interest. The $d$-CFF treats every block the same and allows for any combination of up to $d$ blocks to be  modified.  
Specific applications can have different requirements about what combinations of blocks can be modified by specifying a different authorized modification structure. Moreover, other \new{hypotheses} about more likely distribution of modifications throughout the document can be used for efficiency\new{;} for example, modified blocks may be concentrated close to each other. One idea in this direction is to use  blocks with sub-blocks for increasing granularity of modification location and to aid correction. This would involve matrices with a smaller $d$ for the bigger blocks and a larger $d$ for sub-blocks of a block, picking these parameters with the aim of decreasing $t$ and consequently signature size. It is out of the scope of this paper to go into detailed studies of other types of modification location matrices, which we leave for future work.


\begin{thebibliography}{8}
	\bibitem{mtss:Ateniese}	
	G. Ateniese, D. H. Chou, B. de Medeiros, and G. Tsudik.
	\newblock Sanitizable Signatures.
	\newblock In \emph{European Symposium on Research in Computer Security (ESORICS 2005)}, Lecture Notes in Computer Science, 3679, pages 159--177.
	
	\bibitem{mtss:Barreto}
	P. S. L. M. Barreto, H. Y. Kim and V. Rijmen.
	\newblock Toward secure public-key blockwise fragile authentication watermarking.
	\newblock In \emph{IEE Proceedings - Vision, Image and Signal Processing}, 149(2):57--62, 2002.
	
	\bibitem{mtss:Biyashev2012}
	R.~G. Biyashev and S.~E. Nyssanbayeva.
	\newblock Algorithm for creating a digital signature with error detection and
	correction.
	\newblock {\em Cybernetics and Systems Analysis}, 48(4):489--497, 2012.	
	
	\bibitem{mtss:locHash}
	A. De~Bonis and G. Di~Crescenzo.
	\newblock Combinatorial group testing for corruption localizing hashing.
	\newblock In {\em Computing and Combinatorics}, pages 579--591, Berlin, Heidelberg, 2011. Springer Berlin Heidelberg.
	
	\bibitem{mtss:locHash2}
	A.~De Bonis and G.~Di Crescenzo.
	\newblock A group testing approach to improved corruption localizing hashing.
	\newblock Cryptology ePrint Archive, Report 2011/562, 2011.
	\newblock \url{https://eprint.iacr.org/2011/562}.
	
	\bibitem{mtss:Derler}	
	D. Derler, H. C. Pöhls, K. Samelin, and D. Slamanig.
	\newblock A general framework for redactable signatures and new constructions.
	\newblock In \emph{International Conference on Information Security and Cryptology (ICISC 2015)}, Lecture Notes in Computer Science, 9558, pages 3--19. Springer, 2015.
	
	
	\bibitem{mtss:DBMAC}
	G.~{Di Crescenzo}, R.~{Ge}, and G.~R. {Arce}.
	\newblock Design and analysis of dbmac, an error localizing message
	authentication code.
	\newblock In {\em IEEE Global Telecommunications Conference, 2004. GLOBECOM'04.}, volume~4, pages 2224--2228, 2004.	
	
	\bibitem{mtss:blacklistingStinson}
	P. D'Arco and D. Stinson.
	\newblock Fault tolerant and distributed broadcast encryption.
	\newblock In {\em Proceedings of the 2003 RSA Conference on The Cryptographers' Track}, pages 263--280. Springer-Verlag, 2003.
	
	\bibitem{mtss:erdospoly} 
	\newblock P.~Erd{\"o}s, P.~Frankl and Z.~F{\"u}redi,
	\newblock Families of finite sets in which no set is covered by the union of r others,
	\newblock \emph{Israel J. Math.}, \textbf{51} (1985), 79--89.
	
	\bibitem{mtss:furedi}
	\newblock Z.~F\"{u}redi,
	\newblock On r-Cover-free Families,
	\newblock \emph{Journal of Combinatorial Theory}, \textbf{73} (1996), 172--173.
	
	\bibitem{mtss:forensics}
	M.~T. Goodrich, M.~J. Atallah, and R. Tamassia.
	\newblock Indexing information for data forensics.
	\newblock In {\em ACNS}, 2005.
	
	\bibitem{mtss:thaisIPL}
	T. B. Idalino, L. Moura, R. F. Cust{\'o}dio, and D. Panario.
	\newblock Locating modifications in signed data for partial data integrity.
	\newblock {\em Information Processing Letters}, 115(10):731 -- 737, 2015.
	
	
	\bibitem{mtss:Johnson}
	R. Johnson, D. Molnar, D. Song, and D. Wagner. 
	\newblock Homomorphic signature schemes. 
	\newblock In \emph{Topics in Cryptology - Cryptographers Track - RSA 2002.}, pages 244–262. Springer, 2002.
	
	\bibitem{mtss:katzLindell}
	J. Katz, Y. Lindell.
	\newblock {\em Introduction to Modern Cryptography}.
	\newblock Chapman \& Hall/CRC, 2007.
	
	\bibitem{mtss:menezesBook}
	A.~J. Menezes, S.~A. Vanstone, and P. C.~Van Oorschot.
	\newblock {\em Handbook of Applied Cryptography}.
	\newblock CRC Press, Inc., Boca Raton, FL, USA, 1st edition, 1996.
	
	\bibitem{mtss:Niederreiter2007}
	H. Niederreiter, H. Wang, and C. Xing.
	\newblock Function Fields Over Finite Fields and Their Applications to Cryptography.
	\newblock In: Garcia A., Stichtenoth H. (eds)\emph{ Topics in Geometry, Coding Theory and Cryptography}. Algebra and Applications, 2006, 59--104.
	
	\bibitem{mtss:PPS}
	J.~Pastuszak, J.~Pieprzyk, and J.~Seberry.
	\newblock Codes identifying bad signature in batches.
	\newblock In {\em INDOCRYPT}, volume 1977 of {\em Lecture Notes in Computer Science}, pages 143--154. Springer, 2000.
	
	\bibitem{mtss:PohlsThesis}
	H. C. P{\"o}hls.
	\newblock Increasing the Legal Probative Value of Cryptographically Private Malleable Signatures.
	Ph.D. Thesis, University of Passau, 2018.
	
	\bibitem{mtss:PR} 
	E. Porat, A. Rothschild.
	\newblock  {\em Explicit nonadaptive combinatorial group testing schemes}.
	\newblock IEEE Transactions on Information Theory 57 (2011) 7982– 7989.
	
	\bibitem{mtss:broadcastauth}
	R.~Safavi-Naini and H.~Wang.
	\newblock New results on multi-receiver authentication codes.
	\newblock In Kaisa Nyberg, editor, {\em Advances in Cryptology --- EUROCRYPT'98}, pages 527--541. Springer Berlin Heidelberg, 1998.
	
	\bibitem{mtss:Steinfeld}
	R. Steinfeld, L. Bull, and Y. Zheng.
	\newblock Content Extraction Signatures.
	\newblock In \emph{Information Security and Cryptology — ICISC 2001.} Lecture Notes in Computer Science, vol 2288, pages 285--304
	
	\bibitem{mtss:stinsonBook}
	D. Stinson and M. Paterson.
	\newblock {\em Cryptography: Theory and Practice, Fourth Edition}.
	\newblock CRC Press, 4th edition, 2018.
	
	\bibitem{mtss:zaverucha}
	G.~M. Zaverucha and D. Stinson.
	\newblock Group testing and batch verification.
	\newblock In {\em Proceedings of the 4th International Conference on Information Theoretic Security}, ICITS'09, pages 140--157, 2009.
	
\end{thebibliography}
\end{document}